\newtheorem{theorem}{Theorem}
\newtheorem{lemma}{Lemma}
\newcommand{\be}{\begin{eqnarray}}
	\newcommand{\ee}{\end{eqnarray}}
\newcommand{\ba}{\begin{eqnarray*}}
	\newcommand{\ea}{\end{eqnarray*}}
\definecolor{lime}{HTML}{A6CE39}
\DeclareRobustCommand{\orcidicon}{%
	\begin{tikzpicture}
		\draw[lime, fill=lime] (0,0) 
		circle [radius=0.18] 
		node[white] {{\fontfamily{qag}\selectfont \tiny ID}};
		\draw[white, fill=white] (-0.0625,0.095) 
		circle [radius=0.007];
	\end{tikzpicture}
	\hspace{-2mm}
}
\xdef\csname orcid\x\endcsname{\noexpand\href{https://orcid.org/\csname orcidauthor\x\endcsname}{\noexpand\orcidicon}}
\newtheorem{remark0}{Remark}
\newtheorem{fact0}{Fact}
\newtheorem{example0}{Example}
\newtheorem{corollary0}{Corollary}
\newtheorem{proposition0}{Proposition}
\newtheorem{conjecture0}{Conjecture}
\newenvironment{remark}{\begin{remark0} \mbox{}\rm}{\end{remark0}}
\newenvironment{proof}{\noindent{\bf Proof.~} \mbox{} \rm}{ \hfill \fbox{}}
\newcommand{\expect}{\mbox{\rm I\kern-.20em E}}
\newcommand{\reals}{\mbox{\rm I\kern-.20em R}}
\newcommand{\sreals}{\mbox{\small \rm I\kern-.20em R}}
\newcommand{\twocases}[6]
\newif\ifshowchanges
  \providecommand{\added}[2][]{#2}
  \providecommand{\deleted}[2][]{}
  \providecommand{\replaced}[3][]{#2} 
  \providecommand{\definechangesauthor}[2][]{}
\title{A Bayesian Optimal Phase II Design for Randomized Immunotherapy Trials with Delayed Treatment Effects}
\author[]{Zhongheng Cai}
\author[]{Haitao Pan
	\thanks{Corresponding author: Haitao.Pan@stjude.org}}
\affil[]{Department of Biostatistics, St. Jude Children's Research Hospital, Memphis, USA}
\date{}
\begin{document}	
	\maketitle
	\begin{abstract}
Immunotherapy has transformed cancer treatment, yet its delayed therapeutic effects often lead to non-proportional hazards, rendering many conventional phase II designs underpowered and prone to type I error inflation. To address this issue, we propose a novel Bayesian Optimal Phase II design (DTE-BOP2) that explicitly models the uncertainty in the separation timing of treatment effect. The treatment separation timepoint (denoted by $S$) is endowed with a truncated-Gamma prior, whose parameters can be elicited from experts or inferred from historical data, with default settings available when prior knowledge is scarce. Built upon the BOP2 framework \citep{Zhou:2017,Zhou:2020}, our design retains operational simplicity while incorporating type I error control and maintaining the power. Extensive simulations demonstrate that DTE-BOP2 uniformly controls type I error at the nominal level across a wide range of treatment effect separation timepoint $S$. We further observe that the power decreases monotonically as $S$ increases. Importantly, we find that the power is primarily driven by the relative magnitude of treatment benefit before and after the separation time, i.e., the ratio of medians, rather than their absolute values. Compared to the original BOP2, the piecewise weighted log-rank, and the conventional log-rank tests, DTE-BOP2 achieves higher power with smaller sample sizes while preserving type I error robustness across plausible delay scenarios. An open-source \textsf{R} package, \texttt{DTEBOP2} (CRAN), with detailed vignettes, enables investigators to implement the design and analyse phase-II trials exhibiting delayed treatment effects.
\end{abstract}

  \textbf{Keywords} --- 
  \deleted[id=HP]{Bayesian adaptive design, phase II clinical trials, immunotherapy, delayed treatment effect, truncated gamma prior}
  \added[id=HP]{Bayesian adaptive design; Bayesian decision theory; phase-II clinical trials; immunotherapy; delayed treatment effect; non-proportional hazards; truncated-Gamma prior}

\section{Introduction}

Delayed treatment effects (DTEs) present major challenges in the design and analysis of randomized clinical trials, particularly in immuno-oncology. In studies evaluating agents such as immune checkpoint inhibitors or therapeutic cancer vaccines, clinical benefits often do not appear immediately but emerge only after a latency period. For example, in the phase III CheckMate 017 trial \citep{Julie:2015}, which assessed nivolumab in patients with squamous-cell non-small-cell lung cancer (NSCLC), the progression-free survival (PFS) curves for the treatment and control groups overlapped for approximately 2.5 months before beginning to diverge. A similar delayed separation was observed in another phase III NSCLC trial, complicating interim analyses and futility decisions \citep{PazAres:2019}. Likewise, in a phase III study of ipilimumab for advanced melanoma, an overall survival (OS) benefit did not become apparent until about four months after randomization \citep{Hodi:2010}.

These delays are biologically plausible, as immunotherapies typically require time to stimulate the host immune system before producing measurable clinical effects. As a result, early survival curves for the treatment and control arms often overlap before eventual separation. This temporal shift in treatment effect can undermine the performance of conventional trial designs and analysis methods--such as those relying on Kaplan-Meier estimates and log-rank tests--by reducing statistical power and increasing the risk of false-negative conclusions. Collectively, these examples underscore the clinical relevance and growing prevalence of DTEs in oncology, emphasizing the need for alternative design and analysis strategies that appropriately account for delayed effects.

To accommodate delayed treatment effects (DTEs), a number of frequentist methods have been proposed to modify trial design and analysis, particularly in time-to-event settings. For instance, \citet{Zhang:2009} introduced a lag-time model where the treatment effect is assumed to initiate after a prespecified time threshold. Depending on whether treatment discontinuation is treated as censoring (Non-ITT) or all patients are followed regardless of adherence (ITT), they derived distinct sample size formulas, improving upon standard approaches such as Schoenfeld’s method. \citet{Sit:2016} later extended this framework by rescaling time using the control arm’s cumulative hazard, enabling more accurate and computationally efficient power calculations under fixed delays.

Subsequent work further refined hypothesis testing strategies to improve power under delayed effects. \citet{Xu:2017} proposed a piecewise weighted log-rank test that upweights later events—more likely to reflect true treatment benefit—and introduced sample size formulas based on both analytical (APPLE) and simulation-based (SEPPLE) methods. While these methods can be effective when the delay duration is well-characterized, their performance declines when prior knowledge is uncertain or miscalibrated. \citet{Wu:2019} addressed some of these concerns by revising sample size formulas under a fixed alternative hypothesis, improving accuracy in cases with small effect sizes or unbalanced arms.

To further address uncertainty in delay duration, robust testing strategies have been developed. The Maximin Efficiency Robust Test (MERT) proposed by \citet{Ye:2019} optimizes power across a range of plausible delays, while \citet{Xu:2018} introduced the GPW-log-rank test to account for subject-specific lag variability, offering asymptotically optimal weighting. These approaches have also been extended to group sequential designs, where standard interim monitoring can be biased under delayed effects if proportional hazards are assumed \citep{Zhang:2016}.

Despite these advances, frequentist methods often rely on strong assumptions--such as fixed or known delay durations, proportional hazards, or homogeneous treatment effects--that may not hold in early-phase immunotherapy trials where uncertainty is high. \added[id=HP]{This is especially problematic in early-phase trials, where limited patient numbers, short accrual windows, and evolving treatment paradigms make rigid assumptions about effect timing particularly risky.} Moreover, many of these approaches offer limited flexibility for adaptive decision-making during the course of a trial. These limitations underscore the potential value of Bayesian methods, which naturally incorporate uncertainty in key parameters and support a probabilistic framework for trial planning and analysis.

From a Bayesian perspective, relatively few methodological developments have addressed phase II trial designs that explicitly accommodate delayed treatment effects (DTEs). A notable exception is the work of \citet{Sals:2024}, who propose a Bayesian framework specifically tailored for clinical trial planning in settings where DTEs are expected, such as immuno-oncology. Their approach centers on the use of \textit{assurance}, a Bayesian counterpart to conventional power calculations. Unlike traditional power, which assumes fixed parameter values (e.g., a specific treatment effect size), assurance incorporates uncertainty by averaging the power over a prior distribution, thus quantifying the probability of trial success under the current state of knowledge. This makes assurance--also known as average power or predictive power--particularly well-suited for realistic trial evaluations. In their model, survival times in both treatment and control arms follow Weibull distributions, with the hazard functions assumed identical up to a delay time $T$, after which the treatment arm begins to show a potential benefit. They further develop a simulation-based algorithm to compute assurance under such a delayed effect scenario.

While this framework provides valuable insight for trial planning, it does not offer a fully operational design for conducting trials with interim monitoring and decision-making under DTE. One widely adopted design in phase II settings is the Bayesian Optimal Phase II (BOP2) design, proposed by \citet{Zhou:2020}, which is appreciated for its simplicity, efficiency, and ability to control type I error while maximizing power. However, the original BOP2 formulation assumes the proportional hazards—assumption that may not hold in trials involving immunotherapies or other agents with delayed onset of action. Applying standard BOP2 to such settings may therefore lead to underpowered or miscalibrated decisions, especially at interim looks.

To address this methodological gap, we propose the \textbf{Delayed Treatment Effect BOP2} (\textbf{DTE-BOP2}) design, which extends the BOP2 framework by explicitly modeling the unknown delay time using a truncated gamma prior. This prior is incorporated into the Bayesian decision-making process, allowing the design to dynamically adapt to uncertainty in the separation timing of treatment effects. DTE-BOP2 preserves rigorous control of type I and II error rates, improves power when delays are present, and can reduce the required sample size--making it a practical and robust tool for phase II immunotherapy trials. To facilitate implementation, we provide an open-source R package, \texttt{DTEBOP2}, which includes all necessary functions for design calibration, operating characteristic evaluation, and decision rule generation.

The remainder of this paper is organized as follows. \added[id=HP]{Section 2 formalizes the DTE-BOP2 model and the decision framework, including an algorithm for two-stage sample size calculation. Section 3 demonstrates the application of our method using a motivating trial example and illustrates its implementation using the \texttt{DTEBOP2} package. Section 4 presents a comprehensive simulation study comparing DTE-BOP2 with existing methods under varying delay scenarios and prior choices. Section 5 concludes with a discussion of practical implications and future extensions.} \deleted[id=HP]{Section 2 introduces the model and decision rule of the proposed design, along with a sample size determination algorithm for a two-stage setting. Section 3 presents a real trial example to illustrate the application of the method and implementation using the \texttt{DTEBOP2} package and compare DTE-BOP2 with the frequentist approach. Section 4 compares the operating characteristics of DTE-BOP2 with the original BOP2 design and investigate the sensitivity of the choice of priors. Section 5 is a discussion of our method.}

 \section{Method}
\subsection{Prior Specification for the Delay Time $S$}

Suppose that in an interim analysis, $2n$ patients have been enrolled and an equal allocation ratio between the treatment arm and the control arm. For the $i$-th patient, let $T_{ij}$ denote the time to event, where $j = 0$ corresponds to the control arm and $j = 1$ corresponds to the treatment arm. The observed time, $Z_{ij}$, is defined as $Z_{ij} = \min(T_{ij}, C_{ij})$, where $C_{ij}$ represents the administrative censoring time. The event indicator is given by $\delta_{ij} = I(T_{ij} \leq C_{ij})$ for $i = 1, \dots, n$, and $d_{j} = \sum_{i=1}^n \delta_{ij}$ is the number of events in the control and treatment arms, respectively.

For patients in the control arm, we assume $T_{i0} \sim \text{Exp}(\mu_0)$ with mean $\mu_0$ and probability density function $f_{T_{i0}}(t|\mu_0)=\mu_0^{-1}\exp(\added[id=HP]{-}t/\mu_0)$. In the treatment arm, the event time $T_{i1}$ follows a piecewise exponential distribution with a delayed treatment effect. Specifically, given a separation timepoint $S$, the probability density function (p.d.f.) of $T_{i1}$ is:

\begin{equation}
    f_{T_{i1}}(t) = 
    \begin{cases} 
    \frac{1}{\mu_0} \exp(- \frac{t}{\mu_0}), & 0 < t < S, \\
    \frac{1}{\mu_1} \exp(-\frac{t}{\mu_1} - (\frac{1}{\mu_0} -\frac{1}{\mu_1})S), & t \geq S,
    \end{cases}
    \label{eq2.1-1}
\end{equation}

Here, $\mu_0^{-1}$ represents the hazard rate for the interval $(0, S)$, while $\mu_1^{-1}$ represents the hazard rate for the interval $[S, +\infty)$.

To \deleted[id=HP]{model}\added[id=HP]{construct} the prior distribution of $S$, we assume that it is bounded between $L$ and $U$. A truncated Gamma prior is employed for $S$:
\begin{equation}
    \pi(S) \propto \text{Gamma}(s_1, s_2) I(L, U),
    \label{eq2.1_2}
\end{equation}
where $I(L, U)$ indicates that $S$ is restricted to the interval $[L, U]$. The values for $L$ and $U$, and the shape and rate parameters $s_1$ and $s_2$ in \eqref{eq2.1_2}, can be specified \added[id=HP]{in a default manner or guided} by expert opinion. \added[id=HP]{If elicitation is preferred, we may construct the prior distribution of $S$ by posing the following questions to domain experts:}

\begin{enumerate}
\item[1.] \textit{"Based on your knowledge, what are reasonable lower and upper bounds for the delay time $S$ in the population survival curves?"}
\item[2.] \textit{"Based on your clinical experience or intuition, are you able to estimate one or two typical characteristics of $S$ (e.g., its median, expected value, or a likely range where 95\% of values might fall)?"}
\end{enumerate}

Typically, the first question\deleted[id=HP]{--eliciting plausible lower and upper bounds for the delay time $S$--} is relatively straightforward for domain experts to answer. \added[id=HP]{Eliciting plausible lower and upper bounds for the delay time $S$ is often relatively straightforward for domain experts.} \added[id=HP]{For the second question, even rough or qualitative approximations can be valuable. For example, an expert might suggest that the benefit of treatment tends to appear around the third or fourth cycle, or that it is unlikely to occur before two months or after six months. Such impressions can be interpreted into approximate statistical summaries (e.g., mean, median, or quantiles) to guide prior specification.} When such information is available, we fit the truncated Gamma prior by minimizing the discrepancy between the elicited and theoretical values of the summary statistics.

\deleted[id=HP]{Specifically, we estimate the parameters $s_1$ and $s_2$ by minimizing the following weighted least squares (WLS) criterion:}
\added[id=HP]{To calibrate the truncated Gamma prior for $S$, we estimate the parameters $s_1$ and $s_2$ by minimizing a weighted least squares (WLS) criterion that measures the discrepancy between the theoretical summaries of the prior and the elicited quantities from domain experts. This approach accommodates multiple types of summary statistics and allows each to be weighted according to its perceived reliability or importance.}

\begin{align*}
& w_1\sum_{i=1}^n(\text{mean}(S)-\text{mean from Expert } i)^2 + w_2\sum_{i=1}^n(\text{median}(S)-\text{median from Expert } i)^2 + \\
& w_3\sum_{i=1}^n(\text{sd}(S)-\text{sd from Expert } i)^2 + w_4\sum_{i=1}^n(\text{quantile}(S,0.025)-\text{0.025 quantile from Expert } i)^2  +\\
&w_5\sum_{i=1}^n(\text{quantile}(S,0.975)-\text{0.975 quantile from Expert } i)^2,
\end{align*}

where \( S \sim \text{Gamma}(s_1, s_2)I(L, U) \), and \( w_1,\dots,w_5 \) denote user-specified weights reflecting the relative importance of each elicited quantity.

\deleted[id=HP]{Importantly, not all experts need to provide every statistic. The procedure remains valid even when only a subset of elicited quantities is available, allowing for flexible and practical prior specification. After fitting, the estimated prior can be graphically compared against the elicited values to solicit feedback or refinement from experts.}

\added[id=HP]{This procedure is flexible in that not all experts need to provide every summary statistic; the WLS criterion can be computed using whichever quantities are available. Once the prior is fitted, its implied distribution can be visualized and compared against the elicited values to facilitate feedback and potential refinement.}

\subsection{Likelihood and Posterior Inference}
\added[id=HP]{Once the prior for the delay time $S$ is specified, the next step is to derive the likelihood function and the posterior distributions for the key model parameters, conditional on $S$. This involves handling a piecewise exponential model in the treatment arm due to the delayed effect. To simplify the likelihood formulation, we follow \cite{Han:2014} and define the total time-on-test (TTOT) over any interval $(t_1, t_2]$ as the total observed follow-up time for patients within this interval in the treatment arm:}

\begin{equation}
    \text{TTOT}(t_1, t_2) = \sum_{i=1}^{n} \max(0, \min(Z_{i1}, t_2) - t_1).
    \label{eq2.1_3}
\end{equation}

\paragraph{Piecewise Event Counts and Interval Partitioning.}
\added[id=HP]{Let $d_{01}$ and $d_{11}$ denote the number of events observed in the intervals $(0, S]$ and $(S, \tau]$, respectively, where $\tau = \max_i Z_{i1}$. We denote $I_0 = 0$, $I_1 = S$, and $I_2 = \tau$ for notational convenience.}

\paragraph{Joint Likelihood Function.}
\added[id=HP]{Assuming exponential distributions with rates $\mu_0^{-1}$ and $\mu_1^{-1}$, the likelihood function based on all observed data is:}

\begin{equation}
    \begin{aligned}
    &f(z_{10}, \cdots, z_{n0}, z_{11}, \cdots, z_{n1} | \mu_0, \mu_1) = \\
    &\quad \mu_0^{-d_0} \exp \left( -\frac{\sum_{i=1}^{n} z_{i0}}{\mu_0} \right) \cdot
    \mu_0^{-d_{01}} \exp \left( - \frac{\text{TTOT}(I_0, I_1)}{\mu_0} \right) \cdot
    \mu_1^{-d_{11}} \exp \left( - \frac{\text{TTOT}(I_1, I_2)}{\mu_1} \right),
    \end{aligned}
    \label{eq2.1_4}
\end{equation}

\paragraph{Prior Specification and Posterior Updates.}
\added[id=HP]{We assume independent conjugate inverse-Gamma priors for $\mu_0$ and $\mu_1$:}
\begin{equation}
    \pi(\mu_0) \sim \text{Inv-Gamma}(a_0, b_0), \quad \pi(\mu_1) \sim \text{Inv-Gamma}(a_1, b_1),
    \label{eq2.1_5}
\end{equation}

\added[id=HP]{Then, conditional on $S$, the posterior distributions for $\mu_0$ and $\mu_1$ are:}
\begin{align}
    \pi(\mu_0 | \mathcal{D}_n, S) &\sim \text{Inv-Gamma}(a_0 + d_0 + d_{01}, b_0 + \text{TTOT}(I_0, I_1) + \sum_{i=1}^{n} z_{i0}), \nonumber \\
    \pi(\mu_1 | \mathcal{D}_n, S) &\sim \text{Inv-Gamma}(a_1 + d_{11}, b_1 + \text{TTOT}(I_1, I_2)).
    \label{eq2.1_6}
\end{align}

\paragraph{Posterior Distribution for Median Survival Time.}

\added[id=HP]{Although the model is parameterized using the mean survival time $\mu_j$ ($j=0,1$), it is standard practice in oncology and clinical literature to report survival outcomes using the median survival time, denoted by $\tilde{\mu}_j$. This is largely due to convention: most clinical studies present median survival estimates, as they are easily interpretable from Kaplan-Meier curves and commonly used in regulatory and medical reporting.}

\added[id=HP]{Under the exponential distribution, the median and mean survival times are related by $\tilde{\mu}_j = \mu_j \log 2$. This relationship, combined with the scaling property of the inverse-gamma distribution--namely, if $X \sim \text{Inv-Gamma}(a, b)$, then $cX \sim \text{Inv-Gamma}(a, cb)$—allows us to derive posterior distributions for $\tilde{\mu}_0$ and $\tilde{\mu}_1$ directly.}

\added[id=HP]{If we assign the prior $\tilde{\mu}_j \sim \text{Inv-Gamma}(a_j, b_j \log 2)$ for $j=0,1$, then the posterior distributions are given by:}

\begin{align}
    \pi(\tilde{\mu}_0 | \mathcal{D}_n, S) &\sim \text{Inv-Gamma}\left(a_0 + d_0 + d_{01}, \log 2 \cdot \left(b_0 + \text{TTOT}(I_0, I_1) + \sum_{i=1}^{n} z_{i0} \right) \right), \nonumber \\
    \pi(\tilde{\mu}_1 | \mathcal{D}_n, S) &\sim \text{Inv-Gamma}\left(a_1 + d_{11}, \log 2 \cdot \left(b_1 + \text{TTOT}(I_1, I_2) \right) \right),
    \label{eq:median_post}
\end{align}
\added[id=HP]{and the two posterior distributions are conditionally independent given $S$.}

\added[id=HP]{The above conditional independence arises from the model structure. Although the treatment arm data contribute to both $\tilde{\mu}_0$ and $\tilde{\mu}_1$, the contributions are disjoint once $S$ is fixed: the early part $(0, S]$ informs $\tilde{\mu}_0$, and the later part $(S, \tau]$ informs $\tilde{\mu}_1$, where $\tau = \max_i Z_{i1}$. The control arm data are entirely used to inform $\tilde{\mu}_0$. Since the priors for $\tilde{\mu}_0$ and $\tilde{\mu}_1$ are specified independently and the data components influencing each parameter are non-overlapping, the resulting posterior distributions are conditionally independent given $S$.}

\added[id=HP]{This structure simplifies posterior computation: samples of $\tilde{\mu}_0$ and $\tilde{\mu}_1$ can be drawn independently, and quantities such as $P(\tilde{\mu}_1 > \tilde{\mu}_0)$ can be efficiently estimated from independent posterior draws.}

\paragraph{\added[id=HP]{Connecting Clinical and Model-Specific Survival Times.}}
\added[id=HP]{To operationalize the proposed Bayesian framework using clinically meaningful parameters, it is necessary to establish a mapping between model-based median survival $\tilde{\mu}_j$ and the overall clinical median survival $\bar{\mu}_j$. We now introduce this conversion relationship.}

\added[id=HP]{Specifically, when delayed treatment effect is assumed as in Equation~\eqref{eq2.1-1}, the relationship between $\bar{\mu}_1$ and $\tilde{\mu}_1$ becomes:}
\begin{equation}
    \bar{\mu}_1 = 
    \begin{cases}
        \tilde{\mu}_0 & \text{if } \tilde{\mu}_0 < S, \\
        \left(1 - \frac{\tilde{\mu}_1}{\tilde{\mu}_0}\right) S + \tilde{\mu}_1 & \text{if } \tilde{\mu}_0 \geq S,
    \end{cases}
    \label{relation}
\end{equation}
\added[id=HP]{where $S$ is the prespecified separation timepoint of the delayed treatment effect.}

\added[id=HP]{For example, suppose the standard-of-care has a median survival of $\bar{\mu}_0 = 4$ months, and the expected improvement under treatment is $\bar{\mu}_1 = 7$ months, with $S = 2$ months. Solving the equation:}
\[
\left(1 - \frac{\tilde{\mu}_1}{4}\right) \cdot 2 + \tilde{\mu}_1 = 7,
\]
\added[id=HP]{yields the implied $\tilde{\mu}_1 = 10$ months. This conversion ensures that the Bayesian model operates on consistent parameters, while aligning with clinically meaningful quantities.}

\subsection{Bayesian Group Sequential Design}

\deleted[id=HP]{Consider a trial that includes $R$ interim analyses with one final analysis and it will be conducted when the number of enrolled patients reaches $2n_1 < \cdots < 2n_R<2n_{R+1}=2N$, where $2N$ is maximum number of patients. The decision to continue or terminate the trial for futility at each interim analysis is based on a comparison of the median survival time $\tilde{\mu}_0$  and $\tilde{\mu}_1$. Specifically, suppose that $2n_r$, $r=1,...,R+1$ patients have enrolled in two arms, given separation timepoint $S$ and dataset $\mathcal{D}_{n_r}$, the trial is terminated for futility in the interim analysis, if the following $A(S,n_r)=1$:}

\added[id=HP]{We consider a Bayesian group sequential design that includes $R$ interim analyses and one final analysis, conducted at increasing total sample sizes $2n_1 < \cdots < 2n_R < 2n_{R+1} = 2N$, where $2N$ is the maximum total number of patients. At each interim look $r = 1, \dots, R$, the decision to continue or terminate the trial is based on the posterior comparison between the median survival times $\tilde{\mu}_0$ and $\tilde{\mu}_1$.}

\added[id=HP]{Specifically, given a fixed separation timepoint $S$ and data $\mathcal{D}_{n_r}$ from $2n_r$ enrolled patients (with equal allocation), we define a futility stopping rule as:}

\begin{equation}
   \deleted[id=HP]{A(S,n_r)=1(P(\tilde{\mu}_1 < \tilde{\mu}_0 \mid \mathcal{D}_{n_r}, S)> C(n_r)),}
   \added[id=HP]{A(S,n_r) = \mathbb{I} \left( P(\tilde{\mu}_1 < \tilde{\mu}_0 \mid \mathcal{D}_{n_r}, S) > C(n_r) \right),}
    \label{eq2.2_1}
\end{equation}

\deleted[id=HP]{here $C(n_r)$ represents the posterior probability threshold, which depends on the interim sample size $n_r$. Although early stopping for efficacy may also be considered, in this study, we focus on stopping for futility. For the final analysis, if $A(S,N)=0$, we reject the null hypothesis.}

\added[id=HP]{where $\mathbb{I}(\cdot)$ is the indicator function, and $C(n_r)$ is a pre-specified threshold function that depends on the interim sample size $n_r$. That is, the trial stops for futility at interim analysis $r$ if the posterior probability that the treatment median survival is worse than the control exceeds the threshold $C(n_r)$.}

\added[id=HP]{Although early stopping for efficacy could also be considered, we focus here on futility stopping only. At the final analysis ($r = R + 1$), if the trial has not been stopped early (i.e., $A(S, n_i) = 0,i=1,\cdots, R+1$), we reject the null hypothesis.}

The calculation of $P(\tilde{\mu}_1 < \tilde{\mu}_0 \mid \mathcal{D}_{n_r}, S)$, based on the posterior distribution in equation \eqref{eq2.1_5}, involves a double integral, which may pose computational challenges. To simplify this calculation, we introduce the following lemma:

\begin{lemma}
    \begin{equation}
            P(\tilde{\mu}_1< \tilde{\mu}_0 \mid \mathcal{D}_{n_r}, S) = P\left(K < \frac{b_0 + \text{TTOT}(I_0, I_1) + \sum_{i=1}^{n_r} z_{i0}}{b_0 + b_{1} + \sum_{i=1}^{n_r} Z_{i0} + \text{TTOT}(I_0, I_2)}\right),
            \label{formula}
    \end{equation}
    where $K \sim \text{Beta}(a_0 + d_0 + d_{01}, a_1 + d_{11})$.
\end{lemma}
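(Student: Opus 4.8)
\section*{Proof proposal}

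The plan is to collapse the two-dimensional posterior probability to a single Beta tail probability by passing to reciprocals and exploiting the conditional independence of $\tilde{\mu}_0$ and $\tilde{\mu}_1$ given $S$ established above. First I would note that since $\tilde{\mu}_j = \mu_j \log 2$, the common factor $\log 2$ cancels in the event $\{\tilde{\mu}_1 < \tilde{\mu}_0\}$, so it suffices to work with the inverse-Gamma posteriors in \eqref{eq:median_post}. Abbreviating $\alpha_0 = a_0 + d_0 + d_{01}$, $\alpha_1 = a_1 + d_{11}$, $N = b_0 + \text{TTOT}(I_0,I_1) + \sum_i z_{i0}$ and $M = b_1 + \text{TTOT}(I_1,I_2)$, the two posteriors are $\text{Inv-Gamma}(\alpha_0,\, \log 2\cdot N)$ and $\text{Inv-Gamma}(\alpha_1,\, \log 2\cdot M)$.

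Next I would invert. Using that $X \sim \text{Inv-Gamma}(\alpha,\beta)$ implies $1/X \sim \text{Gamma}(\alpha,\text{rate }\beta)$, the event $\{\tilde{\mu}_1 < \tilde{\mu}_0\}$ becomes $\{1/\tilde{\mu}_0 < 1/\tilde{\mu}_1\}$, where $1/\tilde{\mu}_0$ and $1/\tilde{\mu}_1$ are independent Gamma variables of shapes $\alpha_0,\alpha_1$ and rates $\log 2\cdot N,\ \log 2\cdot M$. Because the two rates differ, the Gamma-to-Beta identity does not apply directly; the key step is to rescale each variable to a common unit rate. Writing $1/\tilde{\mu}_0 = \tilde{U}/(\log 2\cdot N)$ and $1/\tilde{\mu}_1 = \tilde{V}/(\log 2\cdot M)$ with $\tilde{U}\sim\text{Gamma}(\alpha_0,1)$ and $\tilde{V}\sim\text{Gamma}(\alpha_1,1)$ independent, the factor $\log 2$ cancels and the event reduces to $\{M\tilde{U} < N\tilde{V}\}$.

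Then I would introduce $K = \tilde{U}/(\tilde{U}+\tilde{V})$, which, since $\tilde{U}$ and $\tilde{V}$ are independent Gammas with equal rate, is exactly $\text{Beta}(\alpha_0,\alpha_1)$. Substituting $\tilde{U} = K(\tilde{U}+\tilde{V})$ and $\tilde{V} = (1-K)(\tilde{U}+\tilde{V})$ into $M\tilde{U} < N\tilde{V}$ causes the common total $\tilde{U}+\tilde{V}$ to cancel, leaving the scalar inequality $K < N/(N+M)$. Finally I would simplify $N/(N+M)$ using the additivity $\text{TTOT}(I_0,I_2) = \text{TTOT}(I_0,I_1) + \text{TTOT}(I_1,I_2)$, which turns the denominator into $b_0 + b_1 + \sum_i Z_{i0} + \text{TTOT}(I_0,I_2)$ and recovers the stated identity.

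The main obstacle is conceptual rather than computational: recognizing that even though the two posteriors carry different scale parameters—so the naive ratio $\tilde{U}/(\tilde{U}+\tilde{V})$ does not by itself encode the comparison—the probability still collapses to a single Beta CDF evaluation once the variables are placed on a common rate and the shared total cancels. The conditional independence given $S$ is precisely what licenses treating $\tilde{U}$ and $\tilde{V}$ as independent, which is essential for the Beta identity to hold.
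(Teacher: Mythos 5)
Your proposal is correct and follows essentially the same route as the paper's proof: both reduce the event $\{\tilde{\mu}_1 < \tilde{\mu}_0\}$ to a comparison of the Gamma-distributed posterior rates, rescale to independent unit-rate Gamma variables so the scale parameters move into the threshold, and invoke the standard fact that $\tilde{U}/(\tilde{U}+\tilde{V}) \sim \text{Beta}(\alpha_0,\alpha_1)$. The only cosmetic difference is that you work directly with the median posteriors and cancel the common factor $\log 2$, whereas the paper passes immediately to $P(\mu_1^{-1} > \mu_0^{-1})$; the substance of the argument is identical.
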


\begin{proof}
    \begin{equation}
    \begin{split}
        P(\tilde{\mu}_1 <\tilde{\mu}_0 \mid \mathcal{D}_{n_r}, S)&=P(\mu_1^{-1} > \mu_0^{-1} \mid \mathcal{D}_{n_r}, S)\\
        &= P\left( \text{Gamma}(a_1 + d_{11}, b_1 + \text{TTOT}(I_1, I_2)) \right. \\
        &\quad\quad> \text{Gamma}\left(a_0 + d_0 + d_{01}, b_0 + \text{TTOT}(I_0, I_1) + \sum_{i=1}^{n_r} z_{i0}) \right) \\
        &= P\left( \frac{\text{Gamma}(a_0 + d_0 + d_{01}, 1)}{\text{Gamma}(a_0 + d_0 + d_{01}, 1) + \text{Gamma}(a_1 + d_{11}, 1)} \right. \\
        &\quad\quad\left.< \frac{b_0 + \text{TTOT}(I_0, I_1) + \sum_{i=1}^{n_r} z_{i0}}{b_0 + b_1 + \sum_{i=1}^{n_r} z_{i0} + \text{TTOT}(I_0, I_2)} \right).
    \end{split}
    \end{equation}

    By the properties of the Gamma distribution, we know that
    \[
    \frac{\text{Gamma}(a_0 + d_0 + d_{01}, 1)}{\text{Gamma}(a_0 + d_0 + d_{01}, 1) + \text{Gamma}(a_1 + d_{11}, 1)} \sim \text{Beta}(a_0 + d_0 + d_{01}, a_1 + d_{11}).
    \]
    This completes the proof.
\end{proof}

Lemma~1 implies that under the null hypothesis (\( H_0 \)), the proposed decision rule is explicitly dependent on the value of \( S \), and as a result, the type I error rate becomes a function of \( S \). This stands in contrast to the traditional BOP2 design, in which the type I error rate is fixed and does not vary with \( S \). This dependence necessitates additional care in controlling the type I error across the plausible range of \( S \).

\paragraph{Controlling Average Type I Error and Power under Uncertain $S$.}

\deleted[id=HP]{Given that the separation timepoint \( S \) is assumed to be random within a specified range, it is natural to focus on controlling the \textit{average} type I error and \textit{average} power, rather than these metrics at a fixed timepoint \( S \). This approach inherently addresses the uncertainty associated with the separation timepoint, providing robust operating characteristics suitable for practical applications. The average type I error and power are defined as}
\added[id=HP]{Given that the separation timepoint \( S \) is uncertain and modeled as a random variable within a pre-specified range, it is more appropriate to control the \textit{average} type I error and \textit{average} power, rather than these metrics at a fixed $S$. This marginal approach naturally accounts for the variability in $S$, offering robust operating characteristics across a plausible range of clinical scenarios.}

\added[id=HP]{Specifically, let \( A(S, n_r) = 1 \) denote early termination for futility at interim sample size \( n_r \). Then, the average type I error and average power are defined as follows:}
\begin{align}
    \text{Average Type I error}&=\int \prod_{r=1}^{R+1}(1-A(S,n_r))\pi(S)dS,\text{when $H_0$ is true}\label{averagetypei}\\
    \text{Average power}&=\int \prod_{r=1}^{R+1}(1-A(S,n_r))\pi(S)dS,\text{when $H_1$ is true}\label{averagepower}
\end{align}

\added[id=HP]{This formulation ensures that the decision criteria are averaged over the possible realizations of $S$, rather than being conditional on a single value. In fact, the concept of average error rates has been closely linked to the notion of “assurance” in prior literature (e.g., \cite{Sals:2024}), and has been advocated as a practically meaningful criterion when dealing with design uncertainty.}

\paragraph{\added[id=HP]{Selection of Tuning Parameters via Simulation}}

\added[id=HP]{Having defined the average type I error and power metrics, we now describe how the tuning parameters $(\lambda, \gamma)$ are selected to meet these design criteria.}

\deleted[id=HP]{As described by \cite{Zhou:2017}, the stopping threshold \( C(n_r) = 1 - \lambda \left(\frac{n_r}{N}\right)^\gamma \) in the BOP2 design is adaptive, becoming more stringent as the trial progresses. Early in the trial, a more lenient efficacy requirement is used to prevent premature termination due to limited data, allowing time for additional data collection. However, as the trial nears completion, stricter stopping criteria are applied to ensure timely discontinuation of treatments that show no benefit. The tuning parameters \( \lambda \) and \( \gamma \) are calibrated to maximize statistical power while controlling the type I error rate at a pre-specified level.}

\added[id=HP]{Following the BOP2 framework \cite{Zhou:2017}, the adaptive threshold function \( C(n_r) = 1 - \lambda \left(\frac{n_r}{N}\right)^\gamma \) becomes progressively more stringent as the trial advances. This design prevents premature termination due to sparse early data by using a lenient threshold initially, and later enforces stricter futility stopping as more information accumulates. The parameters \( \lambda \) and \( \gamma \) define the shape of this function and are calibrated to control the \textit{average} type I error at a pre-specified level while maximizing \textit{average} power.}

\added[id=HP]{We determine the optimal values of $(\lambda, \gamma)$ through a simulation-based algorithm as detailed in Algorithm~\ref{A11}.}

\begin{algorithm}
\caption{\added[id=HP]{Optimal Tuning Parameters Search via Simulation}}
\label{A11}
\begin{algorithmic}[1]
\State \textbf{Input:} \added[id=HP]{Target type I error rate $\alpha$, hypothesized median survival times $\bar{\mu}_0$ and $\bar{\mu}_1$ under $H_0$ and $H_1$}
\State \textbf{Output:} \added[id=HP]{Optimal $(\lambda, \gamma)$ pair that satisfies type I error control and maximizes power}
\State \textbf{Step 1:} \added[id=HP]{Specify $H_0$, $H_1$, and target type I error level $\alpha$}
\For{each candidate pair $(\lambda, \gamma)$ on a predefined grid}
    \State Simulate \added[id=HP]{10,000} datasets under both $H_0$ and $H_1$
    \State Under $H_0$: simulate \added[id=HP]{$T_{i0} \sim \text{Exp}(\mu_0)$}
    \State Under $H_1$: simulate \added[id=HP]{$T_{i1}$ using Equation~\eqref{eq2.1-1}, with $S$ sampled from the prior in Equation~\eqref{eq2.1_2}}
    \State Apply stopping rule in Equation~\eqref{eq2.2_1}
    \State Compute \added[id=HP]{average type I error and average power}
\EndFor
\State \textbf{Step 2:} \added[id=HP]{Select $(\lambda, \gamma)$ pairs with type I error $\leq \alpha$}
\State \textbf{Step 3:} \added[id=HP]{Among those, choose the pair that yields the highest power}
\end{algorithmic}
\end{algorithm}

\begin{remark}
    While the tuning parameters \((\lambda, \gamma)\) are optimized to control the \emph{average} type I error across the delay interval \( S \in [L, U] \), a common concern arises from the fact that the true value of \( S \) is unknown in practice. Consequently, there is a possibility that the type I error may slightly exceed the nominal level \(\alpha\) at certain values of \( S \), particularly near the boundaries.

To mitigate this, our design explicitly imposes type I error control not only in expectation but also at the endpoints \( S = L \) and \( S = U \). This dual-calibration strategy enhances robustness by ensuring both local (boundary) and global (average) control, as formalized in Equation~\eqref{eq2.2_1}.

As demonstrated in the type I error trajectory shown in Figure~\ref{fig:app_1} (Appendix~A), the error rate exhibits only modest variation across the plausible range of \( S \). This validates the practical effectiveness of enforcing control at the boundaries as a surrogate for uniform control across the entire interval. Although this constraint may lead to a slight reduction in statistical power, our simulation results indicate that the loss is minimal and remains well within acceptable operational margins.

\end{remark}

\paragraph{\added[id=HP]{Specifying the Most Likely Delay Time.}}
\added[id=HP]{In addition to calibrating decision rules based on the distribution of $S$, it is often useful to specify a representative or most likely value of the delay time in practice. This quantity is denoted as \( S_{\text{likely}} \), and can be used to define \( \tilde{\mu}_1 \) corresponding to a target clinical median survival via Equation~\eqref{relation}.}

\added[id=HP]{This value can be elicited in several ways. One option is to define \( S_{\text{likely}} \) as the mean of the truncated Gamma prior for \( S \), or as a simple midpoint of the elicited interval \([L, U]\), i.e., \( S_{\text{likely}} = (L + U)/2 \). Alternatively, when historical survival data are available, \( S_{\text{likely}} \) may be estimated empirically as the time point at which the Kaplan--Meier curves for the treatment and control arms begin to separate. This estimate provides a clinically grounded basis for calibrating the design under realistic delay scenarios.}

\subsection{Sample Size Determination in Two-Stage Design}

\added[id=HP]{While the original BOP2 framework provides practical decision rules, they do not include procedures for determining the necessary sample size to achieve desired operating characteristics,} even in the two-stage setting. In this subsection, we introduce a novel method for determining both the interim and final sample sizes when the design consists of two stages, filling an important gap in the current literature. \added[id=HP]{We focus on the two-stage setting in this section due to its balance between computational tractability and practical utility. Extending the proposed sample size methodology to multiple interim looks would require high-dimensional integration over nested stopping boundaries, which is computationally intensive and beyond the scope of this paper.}

The sample size for a two-stage design is determined by minimizing the following weighted expected sample size function, while ensuring that the power is at least $1-\beta$ when \( S \in [L,U] \). \added[id=HP]{This design aims to balance the need to minimize patient exposure to futile treatments and to maximize access to effective therapies.} \deleted[id=HP]{To balance two competing objectives, minimizing patient exposure to futile treatments and maximizing access to effective treatments, we define the function as follows:}

\begin{equation}
 EN = w \frac{PS_{H_0}\times n_1 + (1 - PS_{H_0})(n - n_1)}{n} +
(1 - w) \left(1 - \frac{PS_{H_1} \times n_1 + (1 - PS_{H_1})(n - n_1)}{n} \right)
\label{samplesize}
\end{equation}

Here, $PS_{H_i},i=0,1$ is the probability of stopping the trial under $H_i$, $n_1$ is the interim sample size per arm, $n$ is the final sample size per arm and  $w\in[0,1]$ is a tuning parameter representing the trade-off between two goals:
\begin{itemize}
    \item When $w=1$: Equation \eqref{samplesize} simplifies to 
    \[
    EN=E(N|H_0)=\frac{PS_{H_0}\times n_1 + (1 - PS_{H_0})(n - n_1)}{n}
    \]
    This expression reflects the expected sample size under the null hypothesis. Minimizing this value emphasizes early stopping when the treatment is futile, thereby reducing unnecessary patient exposure to ineffective therapies.
    
    \item When $w=0$: Equation \eqref{samplesize} becomes 
    \[
    EN=1-E(N|H_1)=1 - \frac{PS_{H_1} \times n_1 + (1 - PS_{H_1})(n - n_1)}{n}
    \]
    \replaced[id=HP]{This reflects an emphasis on maximizing the number of patients who can access a truly effective treatment. By minimizing $1-E(N|H_1)$, the design encourages continuation when the treatment is promising, avoiding premature termination.}{In this case, the focus shifts to maximizing the number of patients who can access a truly effective treatment. By minimizing $1-E(N|H_1)$, we effectively encourage continuation of the trial when the treatment is promising, avoiding premature termination that could deny access to beneficial therapy.}
    
    \item \added[id=HP]{By varying $w$, the design allows flexible prioritization between safety (minimizing exposure to ineffective treatment) and efficacy (maximizing access to effective treatment), while ensuring adequate statistical power when the true signal is weak.}
\end{itemize}

\added[id=HP]{The sample size computation algorithm is summarized below in Algorithm \ref{Al2}.}

\begin{algorithm}[H]
\caption{\deleted[id=HP]{Calculating Interim and Final Sample Size with Optimal Design Parameters} \added[id=HP]{Sample Size Determination for Two-Stage DTE-BOP2 Design}}
\label{Al2} 
\begin{algorithmic}[1]
\State \textbf{Input:} Desired type I error rate $\alpha$, power $1 - \beta$, overall median survival time $\bar{\mu}_0$, $\bar{\mu}_1$, follow-up duration, \deleted[id=HP]{accural} \added[id=HP]{accrual} rate.
\State \textbf{Output:} Optimized interim and maximum sample sizes, optimal design parameters.

\State \textbf{Step 1: Estimate Initial Maximum Sample Size} \\
Compute:
\[
\text{nmax} = \frac{4(z_{1-\alpha/2} + z_{1-\beta})^2}{P(\log(\tilde{\mu}_1/\deleted[id=HP]{\tilde{\mu}_2}\added[id=HP]{\tilde{\mu}_0}))^2} \cdot \frac{\tilde{\mu}_0 + \tilde{\mu}_1}{\tilde{\mu}_0 + \bar{\mu}_1}
\]
to obtain a preliminary estimate of the maximum sample size, where $P$ is the overall event rate. The first part is sample size using Schoenfeld formula~\citep{Scho:1981}. The second\deleted[id=HP]{one} \added[id=HP]{part} is the adjustment factor \added[id=HP]{accounting for} \deleted[id=HP]{with considering}the delayed treatment effect.

\State \textbf{Step 2: Select Initial Interim Sample Size} \\
Set a provisional interim analysis point as a proportion (e.g., 70\%) of \deleted[id=HP]{nmax} \added[id=HP]{the preliminary sample size}.

\State \textbf{Step 3: Optimize Design Boundaries} \\
Determine the decision boundaries to control the average type I error rate and maximize average power. \added[id=HP]{This can be implemented using the procedure described in Algorithm~\ref{A11}.}

\State \textbf{Step 4: Refine Interim Sample Size} \\
Given the current \deleted[id=HP]{nmax} \added[id=HP]{maximum sample size} and boundaries, search for an interim sample size that minimizes Equation~\eqref{samplesize}.

\State \textbf{Step 5: Evaluate Power} \\
Obtain the power of the design via simulation. If the power is below the target level, proceed to Step 6.

\State \textbf{Step 6: Iterative Adjustment Loop}
\Repeat
    \State Increase \deleted[id=HP]{nmax} \added[id=HP]{maximum sample size} by a fixed increment (e.g., 5);
    \State Re-optimize the decision boundaries;
    \State Recalculate power;
\Until{target power is achieved}

\State \textbf{Step 7: Output Final Design} \\
Return the interim sample size, maximum sample size, and optimal design parameters.

\end{algorithmic}
\end{algorithm}

\added[id=HP]{From Algorithm~\ref{Al2}, two optimization strategies are considered for determining the final sample size and decision boundaries: optimal and suboptimal.}

\added[id=HP]{\textbf{Optimal Strategy}: Under the optimal method, the design parameters -- including the interim and maximum sample sizes as well as the Bayesian decision boundaries -- are iteratively optimized. At each step where the maximum sample size is incremented, the decision boundaries (e.g., posterior cutoff thresholds such as $C(n_r)$) are re-estimated by searching over a predefined grid of tuning parameters (e.g., $(\lambda, \gamma)$ in Algorithm~\ref{A11}). This ensures that the decision rule remains optimal with respect to the updated trial configuration.}

\added[id=HP]{This approach guarantees that the final design is fully optimized in both dimensions--sample size and decision thresholds--often resulting in a more efficient trial with smaller expected sample size under $H_0$ and improved power under $H_1$. However, this strategy is computationally intensive due to the repeated boundary optimization required at each candidate value of the maximum sample size.}

\paragraph{\added[id=HP]{Pragmatic Strategy \deleted[id=HP]{Suboptimal Strategy")}}} \added[id=HP]{Under the pragmatic method, boundary optimization is performed only once, based on an initial estimate of the sample size. As the sample size is adjusted iteratively to achieve the desired statistical power, the originally estimated decision boundaries are retained throughout. Only after the sample size satisfies the power requirement are the boundaries re-optimized one final time. This strategy substantially reduces computational burden while still producing a design that performs comparably to the fully optimized approach. The trade-off lies in potentially slightly lower efficiency compared to the optimal strategy, but the reduced computational cost makes it particularly appealing for use in practice.}

\added[id=HP]{These two strategies represent a balance between computational feasibility and statistical efficiency, allowing users to choose based on their design priorities and available resources.}

\deleted[id=HP]{The procedure described above is implemented in the function \texttt{Two\_stage\_sample\_size}($\cdot$) from R package \texttt{DTEBOP2}. The optimal and suboptimal strategies can be specified by setting \texttt{method = "optimal"} and \texttt{method = "suboptimal"}, respectively.}
\added[id=HP]{The procedure described above is implemented in the function \texttt{Two\_stage\_sample\_size}($\cdot$) from the R package \texttt{DTEBOP2}. The two strategies described earlier can be selected by setting \texttt{method = "optimal"} for the fully optimized approach and \texttt{method = "suboptimal"} for the pragmatic strategy. Although labeled as "suboptimal" in the function interface, this setting corresponds to the pragmatic strategy discussed above.}

\paragraph{\added[id=HP]{Design Optimization with Early Stopping Constraint}}

\deleted[id=HP]{In some situations, the probability of early stopping under the null hypothesis, denoted $PS_{H_0}$, may be undesirably low (e.g., 25\%) when computed directly by the algorithm.}
\added[id=HP]{In some applications, the default algorithm may yield a relatively low probability of early stopping under the null hypothesis (e.g., $PS_{H_0} = 25\%$),}
\added[id=HP]{which may be undesirable in phase II trials where ethical considerations and resource efficiency are important.}
\deleted[id=HP]{This reflects a limited capacity of the design to terminate early for futility, which is generally suboptimal in phase II trials where early stopping is important for ethical and resource considerations.}

\added[id=HP]{To address this issue, our method allows users to impose a lower bound constraint on $PS_{H_0}$.}\deleted[id=HP]{To address this limitation, our method allows users to specify a minimum acceptable threshold for $PS_{H_0}$.}\deleted[id=HP]{When such a threshold is provided, the algorithm no longer minimizes the expected sample size $EN$ in Equation~\eqref{samplesize}.}
\added[id=HP]{When this constraint is specified, the algorithm no longer seeks to minimize the expected sample size $EN$ defined in Equation~\eqref{samplesize}.}\deleted[id=HP]{Instead, it identifies the interim and final stage sample sizes that satisfy both the power requirement (i.e., at least $1 - \beta$ when $S \in [L,U]$) and the constraint that $PS_{H_0}$ is not below the user-specified threshold.}\added[id=HP]{Instead, it searches for the interim and final sample sizes that jointly satisfy (i) the desired power requirement (i.e., at least $1 - \beta$ when $S \in [L, U]$), and (ii) the early stopping probability under the null hypothesis exceeding the user-defined minimum.}\deleted[id=HP]{Since $PS_{H_0}$ generally increases with the interim sample size $n_1$ for a fixed total sample size $n$, the algorithm restricts the ratio $n_1/n$ to be less than 0.75 by default.}
\added[id=HP]{Because $PS_{H_0}$ generally increases with the interim sample size $n_1$ (for fixed $n$), the algorithm restricts the ratio $n_1/n$ to be below 0.75 by default.} This limit prevents designs that allocate an excessively high proportion of the total sample to the interim stage, which could reduce the remaining sample size and thereby compromise the overall statistical power.

\deleted[id=HP]{This functionality is implemented in the function \texttt{Two\_stage\_sample\_size}($\cdot$). For example, if the user wishes to ensure $PS_{H_0} \geq 0.4$, the sample size can be obtained by setting \texttt{earlystop\_prob = 0.4} in \texttt{Two\_stage\_sample\_size}($\cdot$).}
\added[id=HP]{This functionality is implemented in the function \texttt{Two\_stage\_sample\_size}($\cdot$). For example, to ensure that $PS_{H_0} \geq 0.4$, the user may set \texttt{earlystop\_prob = 0.4} in the function call.}

\section{Real Data Example}
\subsection{Example with R Code Demonstration}
\added[id=HP]{In this section, we demonstrate the software implementation of the DTE-BOP2 design using data from a real clinical trial.}

\added[id=HP]{We reconsider the CheckMate 017 study, a randomized, open-label, international phase 3 trial that evaluated the efficacy of nivolumab—a PD-1 immune checkpoint inhibitor—compared to docetaxel in patients with squamous-cell non-small-cell lung cancer (NSCLC) after prior platinum-based chemotherapy. In this example, we focus on the progression-free survival (PFS) endpoint, which exhibited delayed treatment effects.}

\added[id=HP]{To illustrate the DTE-BOP2 framework, we use the \texttt{DTEBOP2} R package to redesign the trial with PFS as the primary endpoint. The Kaplan-Meier (KM) curves were reconstructed using the \texttt{PlotDigitizer} software and the \texttt{reconstructKM} R package. The curves for docetaxel and nivolumab are nearly identical up to approximately 2.5 months before they begin to separate, indicating a delayed treatment effect. According to the original publication, the hazard ratio of nivolumab versus docetaxel was 0.62 (95\% CI, 0.47–0.81), showing a statistically significant benefit for nivolumab.}

\added[id=HP]{Below are the steps to implement DTE-BOP2 in practice:}

\begin{enumerate}
    \item[Step 1.] \added[id=HP]{Specify the overall median survival times under standard-of-care and treatment arms, denoted as $\bar{\mu}_0$ and $\bar{\mu}_1$, respectively. These values define the null and alternative hypotheses. For instance, 2.8 months vs. 3.5 months.}
    
    \item[Step 2.] \added[id=HP]{Define the prior distribution for the delay time \(S\) using a truncated Gamma distribution:}
    \[
    \added[id=HP]{S \sim \text{Gamma}(s_1, s_2)\,I(L, U),}
    \]
    \added[id=HP]{and specify a best-guess value for the most likely delay time.}
    
    \item[Step 3.] \added[id=HP]{Determine sample size and optimal parameters:}
    \begin{itemize}
        \item \added[id=HP]{If the study is a two-stage design, use the function \texttt{Two\_stage\_sample\_size($\cdot$)} to compute interim and final sample sizes, as well as the optimal tuning parameters \(\lambda\) and \(\gamma\), such that Equation~\eqref{averagetypei} is controlled below $\alpha$ and Equation~\eqref{averagepower} is maximized.}
        \item \added[id=HP]{If the study involves more than two stages, the user must provide interim and final stage sample sizes. Then use \texttt{get.optimal\_2arm\_piecewise($\cdot$)} to estimate \(\lambda\) and \(\gamma\) that optimize the design.}
    \end{itemize}
    
    \item[Step 4.] \added[id=HP]{Once data become available, the \texttt{conduct($\cdot$)} function can be used to perform interim or final analyses based on the design parameters obtained from Step 3.}
\end{enumerate}

Based on published results, we assume the \textit{overall }median progression-free survival times to be $\bar{\mu}_0 = 2.8$ months for the standard of care (SOC) arm and $\bar{\mu}_1 = 3.5$ months for the experimental arm. Suppose that type I error is controlled at $\alpha=10\%$ and type II error is controlled at $\beta=15\%$ (e.g., power is 85\%). Applying the conventional log-rank test, which does not account for potential delayed treatment effects,  the required number of events is calculated as follows:

\[
\frac{4(z_{0.95} + z_{0.85})^2}{\left(\log\left(\frac{2.8}{3.5}\right)\right)^2} = 578,
\]

\added[id=HP]{This required events is relatively large and can not be afforded in real practice.}

\added[id=HP]{In this example, we fix the follow-up window at 6 months. The interim analysis will be conducted once the target sample size is accrued in both treatment and control arms, and the final analysis will occur after the last patient has completed the 6-month follow-up. We assume a constant accrual rate of 6 patients per month per arm.}

\paragraph{\added[id=HP]{Specifying the Prior for Delayed Separation Timepoint}}

\added[id=HP]{To specify the prior distribution for the delayed separation timepoint \( S \), we incorporate expert opinion. Based on clinical experience and supporting literature, suppose that the clinical team identified a plausible range with a lower bound \( L = 2 \) months and an upper bound \( U = 2.5 \) months.}

\added[id=HP]{When experts can provide more specific estimates for \( S \), such as mean and median values, these can be used to fit a truncated Gamma prior distribution. For instance:}
\begin{itemize}
    \item \added[id=HP]{Expert A: Mean = 2.2, Median = 2.27}
    \item \added[id=HP]{Expert B: Mean = 2.1, Median = 2.3}
    \item \added[id=HP]{Expert C: Mean = 2.3, Median = 2.31}
\end{itemize}

\added[id=HP]{We assume \( S \sim \text{Gamma}(s_1, s_2)I(L, U) \), and use the function \texttt{trunc\_gamma\_para()} in the \texttt{DTEBOP2} package to estimate the parameters \( s_1 \) and \( s_2 \). In the example below, \texttt{weights = c(4,4,2,1,1)} assigns greater emphasis to the mean and median while the other three positions correspond to standard deviation, 25th percentile, and 97.5th percentile, which are set to \texttt{NULL} in this case.}

{\small
\begin{lstlisting}[language=R]
> library(DTEBOP2)
> expert_data_correct <- list(
  list(mean = 2.2, median = 2.27, sd = NULL, q25 = NULL, q975 = NULL), 
  list(mean = 2.1, median = 2.3, sd = NULL, q25 = NULL, q975 = NULL),  
  list(mean = 2.3, median = 2.31, sd = NULL, q25 = NULL, q975 = NULL) 
 )
> param <- trunc_gamma_para(L = 2, U = 2.5, expert_data = expert_data_correct,
weights = c(4,4,2,1,1), num_cores = 4) 
> print(param)
> $shape
> [1] 12.85641
> $scale
> [1] 0.1931649
\end{lstlisting}
}

\added[id=HP]{The estimated shape and scale parameters are \( s_1 = 12.86 \) and \( s_2 = 0.19 \), respectively. Figure~\ref{fig-1} displays the prior density plot, which is approximately uniform over \([2, 2.5]\), indicating strong agreement among experts.}

\begin{figure}[!htp]
    \centering
    \includegraphics[width=1\linewidth]{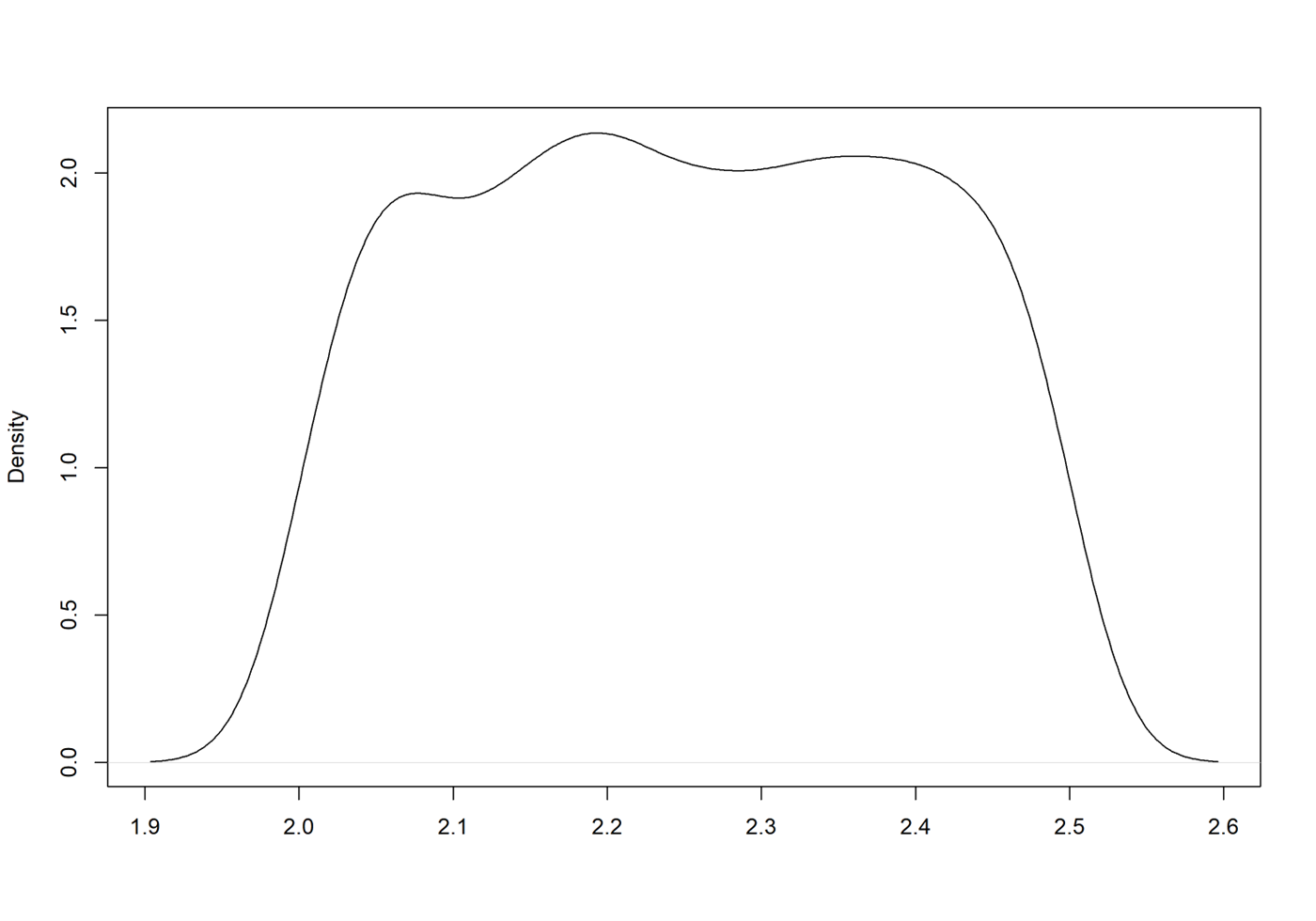}
    \caption{The density plot of $S\sim \text{Gamma}(12.86,0.19)I(2,2.5)$}
    \label{fig-1}
\end{figure}

\paragraph{\added[id=HP]{Identifying the Best-Guessed Separation Time Point}}

\added[id=HP]{For the best-guessed most likely value of the separation timepoint \( S \), denoted \( S_{\text{likely}} \), a simple default strategy is to use the mean of the truncated Gamma prior or the midpoint of the plausible range \([L, U]\), that is, \( (L+U)/2 \).}

\added[id=HP]{Alternatively, when historical data are available that resemble the current study population and setting, they can be used to inform the choice of \( S_{\text{likely}} \). In such cases, we propose a convenient and practical approach: identify the timepoint at which the Kaplan-Meier curves for the experimental and control arms begin to diverge significantly. This divergence often signals the onset of delayed treatment effect.}

\added[id=HP]{In our motivating example based on the CheckMate 017 study, we treat the published data as a form of historical dataset. By visually inspecting the reconstructed Kaplan-Meier curves, we identify that the two survival curves begin to noticeably separate at approximately 2.3 months. To quantify \( S_{\text{likely}} \), we provide illustrative R code that examines the timepoints at which the survival probabilities reach predefined thresholds (e.g., 0.54, 0.53, 0.52, 0.51) in each treatment arm. By comparing these time values between arms, one can verify the earliest signs of divergence.}

{\small
\begin{lstlisting}[language=R]
# Define target survival probabilities for evaluation
y_prob <- c(0.54, 0.53, 0.52, 0.51)
# Obtain summary of KM object
summary_fit <- summary(KM_fit)
arm_times <- list()
# Loop over treatment arms and find corresponding timepoints
for (i in 1:length(KM_fit$strata)) {
  time_points <- summary_fit$time[summary_fit$strata == names(KM_fit$strata)[i]]
  survival_probs <- summary_fit$surv[summary_fit$strata == names(KM_fit$strata)[i]]
  
  # Identify closest timepoints for each target probability
  closest_times <- sapply(y_prob, function(p) {
    closest_index <- which.min(abs(survival_probs - p))
    return(time_points[closest_index])
  })
  
  arm_times[[names(KM_fit$strata)[i]]] <- setNames(closest_times, y_prob)
}
print(arm_times)
$`arm=0`
   0.54    0.53    0.52    0.51 
 2.28803 2.48276 2.48276 2.48276   
$`arm=1`
   0.54    0.53    0.52    0.51 
 2.28923 2.54769 3.26769 3.26769
\end{lstlisting}
}

\added[id=HP]{In our CheckMate 017 example, the output shows the time at which survival probabilities reach certain levels for each arm. Based on the difference in timepoints between arms, the separation is observed to begin around 2.28 months, which is used as \(S_{\text{likely}}\) in our design.}

 \deleted[id=HP]{With $S_{\text{likely}}= 2.28$, we have $\tilde{\mu}_0=2.8$ and $\tilde{\mu}_1=6.57$ by \eqref{relation}. For prior specification of $\mu_0$ and $\mu_1$, we recommend assigning a non-informative inverse-gamma prior to $\mu_0$ centered at the null hypothesis value, $\mu_0 = 2.8/\log(2)$. Specifically, we set the prior as $\mu_0 \sim \text{Inv-Gamma}(4, 12.12)$, which yields a prior median event time of $12.12/(4-1) = 2.8/\log(2)$ and a variance of 8.16. For $\mu_1$, we assume $\mu_1 \sim \text{Inv-Gamma}(4, 24.24)$ so that its prior mean is twice that of $\mu_0$ under the null hypothesis. These priors are used as the default settings in our software package \texttt{DTEBOP2} when the user sets the priors for $\mu_0$ and $\mu_1$ to \texttt{null}. Based on extensive simulation studies, this default prior specification offers stable and reliable performance across a range of scenarios.}

 \paragraph{\added[id=HP]{Prior Specification for \boldmath$\mu_0$ and $\mu_1$}}

\added[id=HP]{Given the best-guess separation timepoint $S_{\text{likely}} = 2.28$ months, Equation~\eqref{relation} yields the median survival times $\tilde{\mu}_0 = 2.8$ and $\tilde{\mu}_1 = 6.57$ months.}

\added[id=HP]{Following the recommendation of \citet{Zhou:2020}, we assign weakly-informative inverse-gamma priors centred at the null values.  Specifically, we set}
\[
\added[id=HP]{\mu_0 \sim \operatorname{Inv\text{-}Gamma}(a_0=4,\; b_0=12.12), \qquad
  \mu_1 \sim \operatorname{Inv\text{-}Gamma}(a_1=4,\; b_1=24.24).}
\]

\added[id=HP]{With shape parameter $a=4$, the prior mean is $b/(a-1)$.  Hence}
\[
\added[id=HP]{\text{Mean}(\mu_0)=\frac{12.12}{3} = 4.04\;\text{months}, 
  \qquad
  \text{Mean}(\mu_1)=\frac{24.24}{3} = 8.08\;\text{months},}
\]
\added[id=HP]{so the prior for $\mu_1$ is centred at twice the prior mean of $\mu_0$, reflecting the anticipated treatment benefit.  The prior variance of $\mu_0$ is $b_0^2/[(a_0-1)^2(a_0-2)]=8.16$, giving a relatively diffuse (non-informative) spread.}

\added[id=HP]{These hyper-parameters are used as the \texttt{DTEBOP2} default when the user leaves the \texttt{gprior.E\_1} and \texttt{gprior.E\_2} arguments set to \texttt{NULL}.  Extensive simulation indicates that this default choice yields stable operating characteristics across a wide range of scenarios, while allowing advanced users to input more informative priors if desired.}

\deleted[id=HP]{Next, given the hypothesis of median survival times of 3.5 months for the experimental arm and 2.8 months for the SOC arm, and from the above with the best guessed delayed separation time point of 2.28 months within the range of $L = 2$ months to $U = 2.5$ months, by assuming accrual rate with 6 patients per month per arm, follow-up of 6 months for the last patient enrolled at final stage. The estimated sample sizes for interim and final stages and the optimal parameters $\lambda,\gamma$ can be computed as follows using function.} 

\deleted[id=HP]{The planned interim and final analyses correspond to per-arm sample sizes of 28 and 40 patients, respectively. Accordingly, the first interim analysis is anticipated to occur approximately 4.67 months after the initiation of recruitment, assuming a steady accrual rate. The optimal parameters, $\lambda$ and $\gamma$, are 0.95 and 1, respectively, with an average type I error rate of 0.0871 and power of 0.8667. Meanwhile, if we may be interested in the corresponding expected number of events for two stages, we can use function.}

\deleted[id=HP]{We anticipate observing approximately 23 events at the interim analysis stage and a total of 68 events by the final analysis. }

\paragraph{\added[id=HP]{Designing a Two-Stage Study with \texttt{DTEBOP2}}}

\added[id=HP]{We now combine the inputs established above to obtain a two-stage DTE-BOP2 design.  
The working hypotheses are an overall median PFS of $\bar{\mu}_0 = 2.8$ months for SOC and $\bar{\mu}_1 = 3.5$ months for the experimental arm.  
The best-guess separation timepoint is $S_{\text{likely}} = 2.28$ months, within the expert-elicited range $L = 2$ to $U = 2.5$ months.  
We further assume a constant accrual rate of 6 patients per month per arm and a fixed follow-up of 6 months for the last-enrolled patient.  
Type I and II error thresholds are set to $\alpha = 0.10$ and $\beta = 0.15$ (power = 85\%).}

{\small
\begin{lstlisting}[language=R]
library(DTEBOP2)

## --- User inputs -------------------------------------------------
median.1   <- 2.8      # SOC median (bar mu_0)
median.2   <- 3.5      # EXP median (bar mu_1)
L          <- 2        # lower bound for S
U          <- 2.5      # upper bound for S
S_likely   <- 2.28     # best-guess separation time
trunc.para <- c(12.86, 0.19)  # shape & scale of truncated Gamma prior
rate       <- 6        # accrual per month per arm
FUP        <- 6        # follow-up (months) for last patient
err1       <- 0.10     # type I error
err2       <- 0.15     # type II error
## -----------------------------------------------------------------

Two_stage_sample_size(median.1 = median.1, median.2 = median.2,
  L = L, U = U, S_likely = S_likely,
  err1 = err1, err2 = err2, trunc.para = trunc.para,
  FUP = FUP, rate = rate, weight = 0.5,
  earlystop_prob = NULL, seed = 123)
#> $sample_size   # per-arm n at interim and final
#> [1] 28 40
#> $optimal       # lambda, gamma, avg type I error, avg power
#> [1] 0.9500 1.0000 0.0871 0.8667
\end{lstlisting}
}

\added[id=HP]{The function returns a two-stage schedule with 28 patients per arm at the interim look and 40 per arm at the final analysis.  
With an accrual rate of 6 patients/month/arm, the first analysis is expected ≈ 4.7 months after study start.  
The optimal decision parameters are $\lambda = 0.95$ and $\gamma = 1.0$, giving an average type I error of 0.087 and power of 0.867 under the assumed distribution of $S$.}

\added[id=HP]{If one wishes to know the corresponding expected \emph{number of events} at each stage, the helper function \texttt{event\_fun()} can be used:}

{\small
\begin{lstlisting}[language=R]
event_fun(median.1 = 2.8, median.2 = 3.5, S_likely = 2.28,
          n.interim = c(28, 40), rate = 6, FUP = 6,
          n.sim = 1000, seed = 123)
#>             Events
#> 1st interim     23
#> Final stage     68
\end{lstlisting}
}

\added[id=HP]{Thus, we expect roughly 23 PFS events at the interim analysis and a total of 68 events by the final analysis under the assumed accrual and follow-up scheme.}

\subsection{Sensitivity Analyses}

\paragraph{\added[id=HP]{Sensitivity Analysis I: Varying the True Separation Time \(S\)}}

\added[id=HP]{We keep the design fixed at
\(S_{\text{likely}}=2.28\) months,
\(\tilde{\mu}_0=2.8\) months,
\(\tilde{\mu}_1=6.57\) months,
\((n_1,n)=(28,40)\) per arm,
and the optimal decision parameters \(\lambda=0.95,\;\gamma=1.0\).
We vary the \emph{true} but unknown separation time \(S\in\{2.0,2.1,\dots,2.5\}\)
to evaluate the following operating characteristics:} (i) \emph{percentage of rejecting the null} (PRN), which is type I error under
\(H_0\) or power under \(H_1\);
(ii) \emph{percentage of early termination} (PET);
(iii) average sample size (each arm);
(iv) average trial duration (months).

\added[id=HP]{The function \texttt{getoc\_2arm\_piecewise()} computes these OCs.
Below is an example for \(S=2.0\) months under \(H_0\):}

{\small
\begin{lstlisting}[language=R]
getoc_2arm_piecewise(
  median.true  = c(2.8, 2.8),    # H0
  lambda       = 0.95,
  gamma        = 1,
  n.interim    = c(28, 40),
  L = 2, U = 2,                  # true S = 2.0
  S_likely     = 2.28,
  FUP          = 6,
  trunc.para   = c(12.86, 0.19),
  rate         = 6,
  nsim         = 1e4, seed = 10)
#> $earlystop        0.2957
#> $reject           0.0841
#> $average.patients 36.45
#> $trial.duration   10.34
\end{lstlisting}
}

\added[id=HP]{The full results for all six values of \(S\) are summarised in Table \ref{tab:Sens1}.}

\begin{table}[H]
\centering
\caption{Operating characteristics of the two-stage DTE-BOP2 design
for six true separation times \(S\) (units: months). Sample size is the per-arm expected accrual patients and trial duration is reported in months.}
\label{tab:Sens1}
\resizebox{\textwidth}{!}{
\begin{tabular}{|c|c|c|c|c|c|}
\hline
True \(S\) & Scenario & PRN (\%) & PET (\%) & Sample size & Duration$^{\dagger}$ \\ \hline
\multirow{2}{*}{2.0} & \(H_0\) & 8.4 & 29.6 & 36.5 & 10.3 \\ \cline{2-6}
                     & \(H_1\) & 88.0 & 7.1 & 39.1 & 12.1 \\ \hline
\multirow{2}{*}{2.1} & \(H_0\) & 8.9 & 28.3 & 36.6 & 10.4 \\ \cline{2-6}
                     & \(H_1\) & 87.2 & 7.4 & 39.1 & 12.1 \\ \hline
\multirow{2}{*}{2.2} & \(H_0\) & 8.7 & 27.1 & 36.7 & 10.5 \\ \cline{2-6}
                     & \(H_1\) & 87.1 & 7.1 & 39.1 & 12.1 \\ \hline
\multirow{2}{*}{2.3} & \(H_0\) & 8.7 & 26.3 & 36.8 & 10.6 \\ \cline{2-6}
                     & \(H_1\) & 86.6 & 7.1 & 39.1 & 12.1 \\ \hline
\multirow{2}{*}{2.4} & \(H_0\) & 8.8 & 25.0 & 37.0 & 10.7 \\ \cline{2-6}
                     & \(H_1\) & 86.2 & 7.3 & 39.1 & 12.1 \\ \hline
\multirow{2}{*}{2.5} & \(H_0\) & 8.7 & 23.8 & 37.1 & 10.8 \\ \cline{2-6}
                     & \(H_1\) & 85.5 & 7.6 & 39.1 & 12.1 \\ \hline
\end{tabular}
}
\begin{flushleft}
\footnotesize $^{\dagger}$Months from first patient in to final analysis.
\end{flushleft}
\end{table}

\added[id=HP]{The simulation confirms that, with \(S_{\text{likely}}=2.28\) fixed, the average type I error stays below 0.10 for all six true \(S\) values, and the power remains above 0.85.  
As the true separation moves from 2.0 to 2.5 months the power gradually declines, because a larger portion of non-separated follow-up time is included in the test statistic.  
Conversely, when the true separation occurs earlier than \(S_{\text{likely}}\), the power increases modestly.}

\added[id=HP]{To more comprehensively assess the DTE-BOP2 design we evaluated four supplementary scenarios:  
(i) enforcing a minimum early-stopping probability of 0.40 under \(H_0\);  
(ii) exploring alternative priors for \(S\) and extending to a three-stage design;  
(iii) varying the accrual rate (slow--fast); and  
(iv) varying the user-specified \(S_{\text{likely}}\) within \([2.0,2.5]\) instead of fixing it at 2.28 months.  
Detailed numerical results for all four scenarios appear in the on-line Supplementary Material.  
Below we highlight the main findings for the last two scenarios.}

\deleted[id=HP]{To more comprehensively assess the performance of the proposed DTE-BOP2 design, we conducted a series of supplementary simulation studies under alternative scenarios. These include: (i) enforcing a minimum early stopping probability of 0.4 under the null hypothesis; (ii) evaluating the impact of alternative prior specifications for $S$ and extending the design to a three-stage framework; (iii) examining the operating characteristics of the design under slower and faster accrual rates; and (iv) varying the user-specified best-guess separation time $S_{\text{likely}}$ within the interval $[2.0, 2.5]$ rather than fixing it at 2.28. Due to space constraints, detailed results are presented in the Supplementary Material. Here, we summarize the key findings from the last two scenarios:}

\paragraph{Sensitivity Analysis II: Impact of Accrual Rate}

\deleted[id=HP]{Table~4 in the Supplementary Material summarizes the impact of varying recruitment rates (3, 6, 9, and 12 patients per arm per month) on key operating characteristics. The findings reveal several important trends:}
\added[id=HP]{Table 4 (Supplement) summarises OCs for recruitment rates of 3, 6, 9 and 12 patients/arm/month.  Key observations are:}
\begin{itemize}
    \item As the accrual rate increases, the early stopping probability under $H_0$ (PET($H_0$)) decreases substantially. For example, PET($H_0$) drops from 40.9\% at rate = 3 to 14.0\% at rate = 12 when $S = 2$.
    \item Type I error remains stable across all accrual rates, consistently controlled within 8.2\% to 9.3\%.
    \item Power declines as the accrual rate increases. Compared to the rate of 6 (used in the main analysis), power decreases by approximately 3\% at rate = 9 and 6\% at rate = 12 (i.e., doubling the accrual rate), primarily due to reduced follow-up time for the delayed treatment effect to manifest.
    
    \item The gap between PET($H_0$) and PET($H_1$) narrows markedly with increasing accrual rates, from over 35\% at rate = 3 to below 4\% at rate = 12. This convergence reduces the discriminative utility of early stopping, especially when the treatment effect emerges late in follow-up \citep{Liu:Takeda:2025, Miki:Uno:2019}.
\end{itemize}

Overally, these results suggest that the design maintains favorable operating characteristics when the actual accrual rate does not deviate substantially from the planned rate.

\paragraph{\added[id=HP]{Sensitivity Analysis III: Varying $S_{\text{likely}}$}}

\added[id=HP]{We replaced the best-guessed \(S_{\text{likely}}=2.28\) with three values (2.00, 2.25, 2.50 months) and re-calculated the two-stage sample size based on \(S_{\text{likely}}=2.00\). The required per-arm sizes increase to (44, 63) with \(S_{\text{likely}}=2.00\).  Table~\ref{Table_s} summarises OCs across all true \(S\) and design-stage guesses:}
\begin{enumerate}
  \item \added[id=HP]{With $n_{\max}=63$ per arm, the design controls the type I error at $\leq 0.10$ and achieves power $\geq 0.85$ for all combinations of the true $S$ and $S_{\text{likely}}$ within $[2.0, 2.5]$. Thus, under the most conservative sample size strategy, the design is doubly robust to misspecification of $S_{\text{likely}}$ and the true unknown $S$.}
  \item \added[id=HP]{A wider prior range \([L,U]\) inflates sample size, because power must be preserved even when the true separation is close to \(U\).}
  \item \added[id=HP]{Choice of \(L,U\) should therefore reflect genuine expert knowledge; overly wide intervals exact a cost in sample size, echoing the recommendation of \citet{Ye:2019}.}
\end{enumerate}

\begin{table}[H]
\centering
\caption{Type I error, power, average sample size under $H_1$, and trial duration under $H_1$ with sample size (44, 63) per arm.}
\resizebox{\textwidth}{!}{%
\begin{tabular}{|c|c|c|c|c|c|}
\hline
\multicolumn{2}{|c|}{\textbf{Scenario}} & \textbf{Type I Error(\%)} & \textbf{Power(\%)} & \textbf{Average Sample Size (H1)} & \textbf{Duration$^{\dagger}$ (H1)} \\
\hline
\multirow{6}{*}{$S_{\text{likely}} = 2$} 
& $S = 2$     & 7.8 & 89.1 & 61.9 & 16.0 \\
& $S = 2.1$   &  8.1 & 88.3 & 61.9 & 16.0 \\
& $S = 2.2$   &  8.3 & 88.3 & 61.8 & 15.9 \\
& $S = 2.3$   &  8.3 & 87.6 & 61.8 & 15.9 \\
& $S = 2.4$   &  8.0 & 87.3 & 61.8 & 15.9 \\
& $S = 2.5$     & 8.2 & 86.9 & 61.8 & 15.9 \\
\hline
\multirow{6}{*}{$S_{\text{likely}} = 2.25$} 
& $S = 2$   & 7.8 & 96.4 & 62.5&16.3  \\
& $S = 2.1$   &8.1 & 96.2 & 62.5 & 16.3 \\
& $S = 2.2$   &8.3 & 96.0 & 62.4 & 16.2 \\
& $S = 2.3$   &  8.3 & 95.9 & 62.5 & 16.2 \\
& $S = 2.4$   &  8.0 & 95.5 & 62.4 & 16.2 \\
& $S = 2.5$   & 8.2 & 95.3 & 62.4 & 16.2 \\
\hline
\multirow{6}{*}{$S_{\text{likely}} = 2.5$} 
& $S = 2$   & 7.8 & 99.5 & 62.9 & 16.4 \\
& $S = 2.1$   &  8.1 & 99.6 & 62.9 & 16.5 \\
& $S = 2.2$   &  8.3 & 99.5 & 62.9 & 16.4 \\
& $S = 2.3$   &  8.3 & 99.3 & 62.9 & 16.4 \\
& $S = 2.4$   & 8.3 & 99.1 & 62.8 & 16.4 \\
& $S = 2.5$   & 8.2& 99.1 & 62.8 &  16.4\\
\hline
\end{tabular}
}
\begin{flushleft}
\footnotesize $^{\dagger}$Months from first patient in to final analysis.
\end{flushleft}
\label{Table_s}
\end{table}

\subsection{Comparison with Frequentist Approaches Incorporating Delayed Treatment Effects}
\added[id=HP]{We reuse the CheckMate~017 dataset introduced in previous section as our real-data template for all simulations in this subsection.}
\deleted[id=HP]{For a comparative study, we adopt parameters from a real data example. Specifically, we set $L=2$, $U=2.5$, $\bar{\mu}_0 = 2.8$ months and $\bar{\mu}_1 = 3.5$ months for the SOC and experimental arms, respectively. We assume the most likely separation timepoint of $S_{\text{likely}} = 2.28$ months for DTE-BOP2. Accordingly, the hazard rates are $\log(2)/2.8$ for both arms with experimental arm before $S$, and $\log(2)/6.57$ for the experimental arm after $S$. We also assume a 6-month follow-up period and an accrual rate of 6 patients per month per arm.}

The log-rank test statistic, commonly used in survival analysis, is given by:

\begin{equation}
 \frac{\sum_{j \in D} (d_{1j} - e_{1j})}{\sqrt{\sum_{j \in D} \frac{n_{1j} n_{2j} d_j (n_j - d_j)}{n_j^2(n_j - 1)}}},
\label{log-rank}
\end{equation}
where $D$ denotes the set of indices corresponding to observed event times. At each event time $t_j$, $d_{1j}$ and $d_{2j}$ represent the number of events in the standard of care (SOC) and experimental groups, respectively. Likewise, $n_{1j}$ and $n_{2j}$ indicate the number of patients at risk in each group just before $t_j$. Here, $d_j = d_{1j} + d_{2j}$, $n_j = n_{1j} + n_{2j}$, and the expected number of events in the SOC group under the null hypothesis is $e_{1j} = n_{1j} d_j / n_j$.

To account for delayed treatment effects (DTE), a piecewise weighted log-rank test has been proposed \citep{Xu:2017, Wu:2019}, given by:

\begin{equation}
L = \frac{\sum_{j \in D_2} (d_{1j} - e_{1j})}
{\left\{ \sum_{j \in D_2} \frac{n_{1j} n_{2j} d_j (n_j - d_j)}{n_j^2(n_j - 1)} \right\}^{1/2}},
\label{Wlog-rank}
\end{equation}
where $D_2$ includes only those event times that occur after the separation timepoint $S$.

\added[id=HP]{Throughout this subsection, the clinical (overall) medians are fixed at $\bar{\mu}_0=2.8$ and $\bar{\mu}_1=3.5$ months, and the elicited range for the true separation time is $[L,U]=[2.0,2.5]$ months with best guess $S_{\text{likely}}=2.28$ months.  
Under the piecewise-exponential model, the experimental-arm post-separation median becomes $\tilde{\mu}_1=6.57$ months (Equation~\eqref{relation}), giving hazard rates} 
\[
\added[id=HP]{
\lambda_0=\frac{\log 2}{2.8}\quad(\text{SOC, all time}),\qquad
\lambda_1(t)=
  \begin{cases}
    \log 2/2.8, & t<S,\\[4pt]
    \log 2/6.57, & t\ge S.
  \end{cases}}
\]
\added[id=HP]{All scenarios assume a 6-month follow-up for the last patient and an accrual rate of 6 patients per arm per month.}

\deleted[id=HP]{We compare the proposed DTE-BOP2 design with the conventional log-rank test, Wu's method \citep{Wu:2019}, and Xu's method \citep{Xu:2017}. Both Wu's and Xu's approaches are based on the piecewise weighted log-rank framework, but differ in their formula of sample size: Wu's method offers a more general sample size formula, whereas Xu's approach is tailored to local alternatives. Simulation results from \cite{Wu:2019} suggest that Wu's method is more powerful, particularly when the hazard ratio is small.}
\added[id=HP]{For comparison we consider four tests: the conventional log-rank, Xu's piecewise weighted log-rank, Wu's extension of that test, and DTE-BOP2.  
Xu's formula is derived under a local-alternative assumption, whereas Wu's handles arbitrary effect sizes; both require the design separation point \(S\) to be preset.  To create a favourable scenario for these frequentist competitors we supply them with the \emph{true} \(S\) in every simulation.  Earlier work \citep{Wu:2019} shows Wu's test to be the more powerful of the two when the hazard ratio is small.}

\deleted[id=HP]{The comparison is conducted from two perspectives:}
\added[id=HP]{The four methods are evaluated using two practical criteria:}
\begin{itemize}
    \deleted[id=HP]{\item Fixing the sample size at 40 per arm, we evaluate the empirical power under true values of $S = 2, 2.1, 2.2, 2.3, 2.4, 2.5$.}
    \item \added[id=HP]{\textbf{Fixed sample size} ($N = 40$ per arm): empirical power is estimated for true separation times $S = 2.0, 2.1, 2.2, 2.3, 2.4, 2.5$\,\text{months}.}
    \deleted[id=HP]{\item Calculating the required sample size for each method under type I and II error rates of $\alpha = 0.1$ and $\beta = 0.15$, respectively.}
    \item \added[id=HP]{\textbf{Fixed operating characteristics} ($\alpha = 0.10$, $\beta = 0.15$): the minimum per-arm sample size required by each method is determined.}
\end{itemize}

\deleted[id=HP]{To ensure consistency across methods, we consider the DTE-BOP2 design without interim analyses, since the frequentist approaches do not incorporate group sequential testing. Our results show that with $N = 40$ per arm, DTE-BOP2 maintains type I error control for $S \in [2, 2.5]$, making it valid for comparison.}
\added[id=HP]{Xu's and Wu's piece-wise weighted log-rank tests do not incorporate group-sequential monitoring, so we run DTE-BOP2 in a single-analysis mode for parity.  
With $N = 40$ per arm, DTE-BOP2 still controls the type-I error below $0.10$ for every $S \in [2, 2.5]$; its empirical power at this sample size can therefore be fairly compared with the frequentist alternatives.}

\deleted[id=HP]{To estimate empirical power, we simulate 10,000 datasets. For the experimental group, event times are generated from a piecewise exponential distribution, with hazard rates $\log(2)/2.8$ before $S$ and $\log(2)/6.57$ after $S$. SOC group events follow a standard exponential distribution with hazard rate $\log(2)/2.8$. We calculate p-values using both the log-rank test \eqref{log-rank} and the piecewise weighted log-rank test \eqref{Wlog-rank} with the true $S$, consistent with the optimal configuration described in \cite{Wu:2019}. The results are presented in Table \ref{power_comp}.}

\added[id=HP]{We generated 10\,000 simulated datasets for each true separation time $S \in \{2.0, 2.1, \dots , 2.5\}$ with a fixed sample size of $N = 40$ patients per arm.  
Events in the SOC arm followed an exponential distribution with hazard $\lambda_0=\log 2 / 2.8$.  
Events in the experimental arm followed a piece-wise exponential distribution with the same hazard before $S$ and $\lambda_1=\log 2 / 6.57$ after $S$.  
For every dataset we computed:  
(i) a conventional log-rank $p$-value \eqref{log-rank};  
(ii) a piecewise weighted log-rank $p$-value \eqref{Wlog-rank} using weight $0$ for $t<S$ and $1$ for $t\ge S$ (the ``optimal'' choice in \citet{Wu:2019}); and  
(iii) the DTE-BOP2 posterior decision.  Empirical power is summarised in Table~\ref{power_comp}.}

\begin{table}[H]
\centering
\caption{\added[id=HP]{Empirical power (10\,000 simulations, $N=40$ per arm, percentage) for three tests under six true separation times.}}
\label{power_comp}
\begin{tabular}{|l|cccccc|}
\hline
\textbf{Method} & $S=2.0$ & $2.1$ & $2.2$ & $2.3$ & $2.4$ & $2.5$ \\ \hline
DTE-BOP2        & 0.94 & 0.93 & 0.93 & 0.93 & 0.93 & 0.92 \\
PW log-rank     & 0.77 & 0.75 & 0.75 & 0.73 & 0.72 & 0.71 \\
Log-rank        & 0.51 & 0.49 & 0.48 & 0.46 & 0.42 & 0.40 \\ \hline
\end{tabular}
\end{table}

\added[id=HP]{DTE-BOP2 has 92 -- 94\% power across all $S$, exceeding the piece-wise weighted log-rank test by at least 15 percentages and the conventional log-rank by more than 40\%, even though the latter two tests are supplied with the true separation point.}

\deleted[id=HP]{For the sample size comparison, the required sample size for DTE-BOP2 is determined such that power exceeds $1 - \beta$ across the interval $S \in [L, U]$, with type I error controlled. For the log-rank test, the minimum sample size is identified through simulations ensuring empirical power exceeds $1 - \beta$. For Wu's and Xu's methods, sample sizes are computed using their respective analytical formulas, assuming the true value of $S$ is known. The total required sample sizes for each method are shown in Table \ref{size_comp}.}

\added[id=HP]{\textbf{Sample-size comparison.}  
For DTE-BOP2 we choose the per-arm sample size so that two-sided type-I error is controlled at $\alpha = 0.10$ and the power is at least $1-\beta = 0.85$ for \emph{every} $S \in [L,U]$. Because we calibrate to the worst-case $S \in [L,U]$, the required sample size for DTE-BOP2 is the same (64/arm) for all $S$.  
For the conventional log-rank test, we search via simulation and obtain the minimum $N$ that yields empirical power $\ge 0.85$ at each fixed true $S$.  
For the weighted log-rank tests we apply the analytical formulas of \citet{Xu:2017} and \citet{Wu:2019}, giving them the \emph{true} separation time, which represents an optimistic scenario for these methods.  Table~\ref{size_comp} summarises the required per-arm sample sizes.}

\begin{table}[H]
\centering
\caption{\added[id=HP]{Per-arm sample size required to achieve two-sided $\alpha = 0.10$ and power $0.85$ under six true separation times.}}
\label{size_comp}
\begin{tabular}{|l|cccccc|}
\hline
\textbf{Method} & $S=2.0$ & 2.1 & 2.2 & 2.3 & 2.4 & 2.5 \\ \hline
DTE-BOP2      & 64 & 64 & 64 & 64 & 64 & 64 \\
Wu (2019)     & 98 & 101 & 104 & 108 & 111 & 115 \\
Xu (2017)     & 97 & 100 & 103 & 107 & 110 & 114 \\
Log-rank      & 166 & 174 & 182 & 192 & 202 & 212 \\ \hline
\end{tabular}
\end{table}

\added[id=HP]{DTE-BOP2 needs only 64 patients per arm -- about 40\% of the sample required by the weighted log-rank tests and less than one-third of that required by the conventional log-rank.  
The gap widens as the true separation time approaches the upper bound of the plausible range: Xu's and Wu's designs must inflate $N$ to compensate for the longer delay, whereas DTE-BOP2 already guards against this uncertainty at the design stage.  
The efficiency gain stems from modelling the delayed effect explicitly and basing the calculation on the post-delay median ratio $\tilde{\mu}_1/\tilde{\mu}_0$ rather than the attenuated overall ratio $\bar{\mu}_1/\bar{\mu}_0$.}


\begin{figure}[H]
\centering
\includegraphics[width=1.1\linewidth]{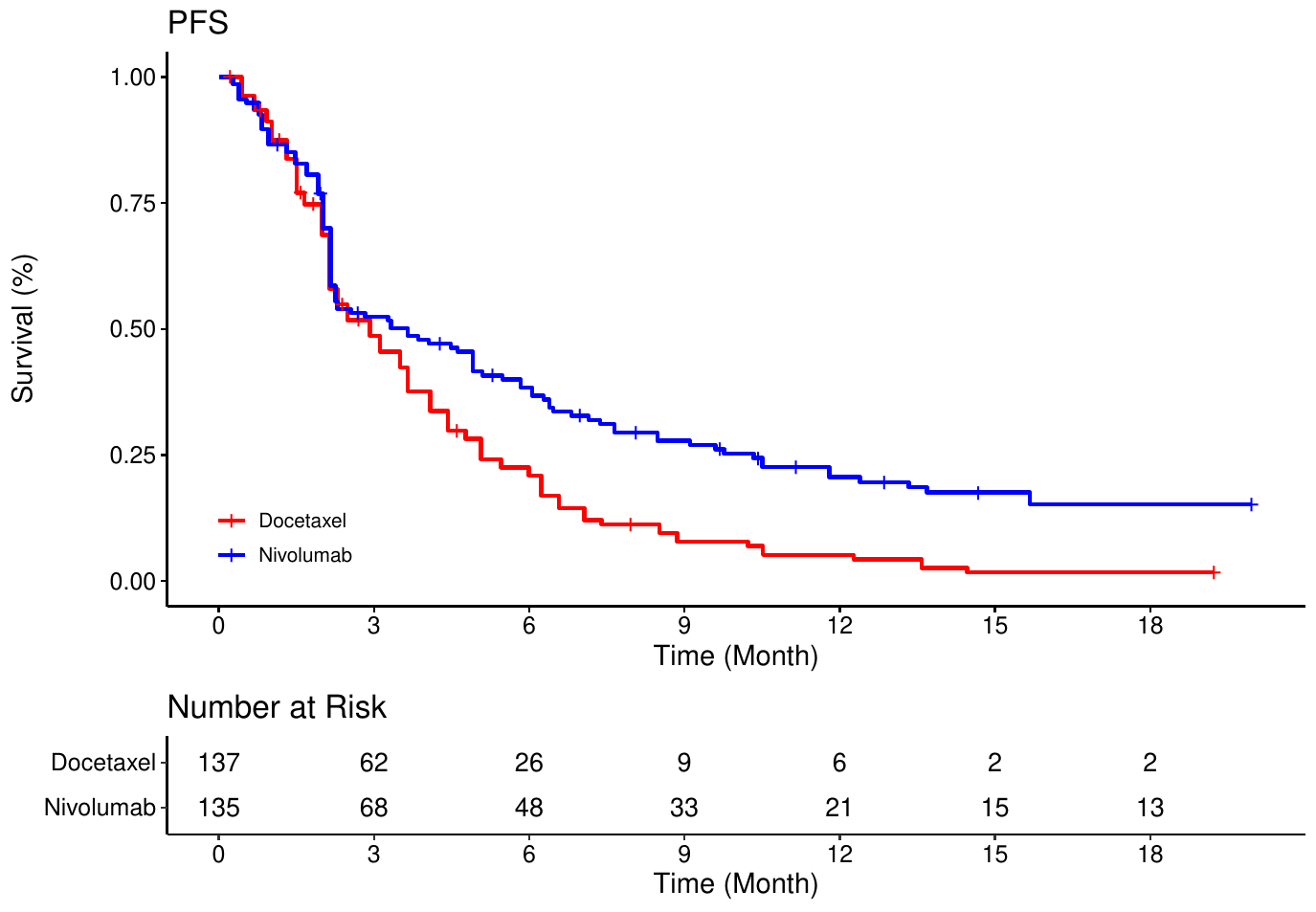}
\caption{Kaplan-Meier plot of PFS in Checkmate 017, in which DTEs are present. The control and experimental treatment curves follow the same trajectory for approximately 2.3 months, after which they separate.}
\label{real_data}
\end{figure}
 
\section{\added[id=HP]{Design Performance Evaluation via Simulation}\deleted[id=HP]{Simulation Study}}
\noindent
\added[id=HP]{While previous sections used simulation to evaluate the statistical power and sample size requirements under varying true values of $S$, this section focuses on evaluating the operating characteristics of the DTE-BOP2 design from other aspects. We assess its performance in terms of early stopping probability, trial duration, and sample size, and compare it with the original BOP2 design. Additional simulations are conducted to study type I error control across a range of $S$ and to evaluate robustness with respect to prior specification of $(L, U)$.}

\subsection{Comparison with BOP2}

We first evaluate and compare the operating characteristics of the DTE-BOP2 design against the BOP2 design \citep{Zhou:2020}. For all designs, the type I error rate was controlled at 0.1. Event times in the control arm were assumed to follow an exponential distribution, \( \text{Exp}(\mu_0) \), while those in the treatment arm were generated according to the density specified in Equation \eqref{eq2.1-1} \added[id=HP]{, which characterizes delayed treatment effects through a piecewise exponential model with change-point at separation time $S$}. Patient arrivals were modeled as a Poisson process with an intensity of \(1/3\) per month per arm, corresponding to an average recruitment rate of three patients per month per arm. \deleted[id=HP]{Following the enrollment of the last patient, follow-up continued for an additional six months.} \added[id=HP]{After the last patient was enrolled, all patients were followed for an additional six months to ensure sufficient observation of events.} Each scenario was evaluated using 10,000 simulated trials.

\deleted[id=HP]{The total sample size for each trial was fixed at 80 patients, with randomization in a 1:1 ratio between the experimental and control arms. One interim analysis was planned when 30 patients had been enrolled in each arm. Performance was assessed based on four key metrics: PRN, PET, average sample size, and trial duration, as defined in Section 3.}

\added[id=HP]{The total sample size for each simulated trial was fixed at 80 patients, with equal randomization (1{:}1) between the experimental and control arms. An interim analysis was planned after 30 patients had been enrolled in each arm. The performance of each design was evaluated using four key metrics: the probability of rejecting the null (PRN), the probability of early termination (PET), the average sample size, and the total trial duration, as defined in Section~3.}

 \deleted[id=HP]{For comparative purposes, we introduce the following concepts:}

\added[id=HP]{For clarity we distinguish between two types of simulated data used in the comparison: calibration datasets for tuning design parameters and evaluation datasets for estimating operating characteristics.}

\begin{itemize}
    \item \deleted[id=HP]{\textbf{Training datasets under the null and alternative hypotheses:}}%
          \added[id=HP]{\textbf{Calibration datasets (under $H_0$ and $H_1$):}}
    \deleted[id=HP]{Suppose the clinician wishes to test the hypotheses:}
    \added[id=HP]{To tune the design we assume a study wishes to test}
    \begin{equation}
        H_0: \bar{\mu}_0 = a \text{ months}, 
        \quad 
        H_1: \bar{\mu}_1 = b \text{ months}, 
        \quad (a < b).
    \end{equation}
    \deleted[id=HP]{Under the alternative hypothesis, training datasets for DTE-BOP2 are generated based on Equation \eqref{eq2.1-1}, with \( S \) following \eqref{eq2.1_2}. In contrast, for BOP2, datasets are generated from an exponential distribution with rate \( b/\log(2) \). Under the null hypothesis, both designs use datasets generated from an exponential distribution with rate \( a/\log(2) \).}
    \added[id=HP]{Under $H_1$, calibration datasets for \emph{DTE-BOP2} are drawn from the delayed-effect model in Equation~\eqref{eq2.1-1}, with the separation time $S$ sampled from the prior in Equation~\eqref{eq2.1_2}.  
    For \emph{BOP2} the same datasets are generated from a single-exponential distribution with median $b$ (hazard $\log 2 / b$).  
    Under $H_0$, both designs use exponential data with median $a$ (hazard $\log 2 / a$).}
    
    \item \deleted[id=HP]{\textbf{Testing datasets:} These datasets are utilized to evaluate the operating characteristics of DTE-BOP2 and BOP2, and are generated from Equation \eqref{eq2.1-1} with a given \( S \).}
          \added[id=HP]{\textbf{Evaluation datasets:} Once the designs are calibrated, new datasets are generated to evaluate PRN, PET, average sample size, and trial duration.  
          Event times in the control arm follow $\text{Exp}(\log 2 / \bar{\mu}_0)$, whereas event times in the experimental arm follow Equation~\eqref{eq2.1-1} with a \emph{fixed} true separation time $S$.}
\end{itemize}

\deleted[id=HP]{We considered the following scenarios:}
\added[id=HP]{Three delay scenarios were explored to cover a spectrum from no delay to substantial delay:}

\begin{enumerate}
  \item[1.] True $S = 0$ with hypotheses $H_0 : \bar{\mu}_0 = 3$ months and $H_1 : \bar{\mu}_1 = 6$ months.
  \item[2.] True $S = 1$ with hypotheses $H_0 : \bar{\mu}_0 = 3$ months and $H_1 : \bar{\mu}_1 = 5$ months.
  \item[3.] True $S = 2.25$ with hypotheses $H_0 : \bar{\mu}_0 = 3$ months and $H_1 : \bar{\mu}_1 = 3.75$ months.
\end{enumerate}

\deleted[id=HP]{For the BOP2 design, the optimal parameters and corresponding operating characteristics were obtained following the procedure described by \citet{Zhou:2020}. For DTE-BOP2, given \( \bar{\mu}_1 \) and \( S \), we derived \( \tilde{\mu}_0 \) and \( \tilde{\mu}_1 \) via Equation \eqref{relation}. We selected \( S_{\text{likely}} = 2.25 \) for all scenarios, with a truncated Gamma prior \( S \sim \text{Gamma}(1, 1) \cdot I(2.2, 2.5) \).}
\added[id=HP]{BOP2 parameters were tuned exactly as in \citet{Zhou:2020}.  
For DTE-BOP2, each pair $(\bar{\mu}_0,\bar{\mu}_1)$ and the true $S$ were first converted to post-delay medians $(\tilde{\mu}_0,\tilde{\mu}_1)$ using Equation~\eqref{relation}.  
A single design was calibrated with $S_{\text{likely}}=2.25$ months--the worst-case (longest) delay--and the truncated-Gamma prior $S\sim\mathrm{Gamma}(1,1)\,I(2.2,2.5)$. With no firm prior knowledge about $S$, we set $s_1=s_2=1$, yielding an (almost) uniform prior on $(L,U)$ for small $U-L$. This “one-design-fits-all” approach allows us to test robustness across shorter delays.}

\deleted[id=HP]{Unless expert information on the separation time \( S \) was available, we set \( s_1 = s_2 = 1 \) in \eqref{eq2.1_2} by default, resulting in a near-uniform distribution over \( (L,U) \) when \( U-L \) is small. For the priors on \( \mu_j \) (\( j = 0,1 \)) in Equation \eqref{eq2.1_4}, we used noninformative priors: \( \text{Inv-Gamma}(4, 9/\log2) \) for \( \mu_0 \) and \( \text{Inv-Gamma}(4, 18/\log2) \) for \( \mu_1 \).}
\added[id=HP]{For $\mu_0$ and $\mu_1$ we used weakly-informative priors $\mathrm{Inv\text{-}Gamma}(4,\,9/\log2)$ and $\mathrm{Inv\text{-}Gamma}(4,\,18/\log2)$, where the medians of the prior equal 3 and 6 months, respectively.}

The simulation results are presented in Table~\ref{Table_1}, leading to the following observations:
\begin{enumerate}
  \item[(a)] When there is no delayed treatment effect ($S=0$), BOP2 and DTE-BOP2 perform similarly.
  \item[(b)] As the true $S$ increases, BOP2 shows a marked loss of power, whereas DTE-BOP2 retains high power.
  \item[(c)] \added[id=HP]{DTE-BOP2 achieves comparable or smaller average sample size and shorter trial duration, while maintaining a lower PET under $H_1$, indicating earlier Go decisions for effective treatments.}
\end{enumerate}

\added[id=HP]{Overall, DTE-BOP2 is robust to the magnitude of the delay and consistently outperforms the original BOP2 design once a non-negligible delay is present.}

\deleted[id=HP]{In summary, the DTE-BOP2 design demonstrates superior performance compared to the original BOP2 design across various scenarios.}
\added[id=HP]{In summary, while both designs perform similarly when there is no delay in treatment effect, DTE-BOP2 substantially outperforms BOP2 in the presence of delay, offering notably higher power and more consistent performance across various delay severities. These results highlight the robustness and adaptability of the DTE-BOP2 design in handling time-to-event endpoints with potential delayed effects.
}

\begin{table}[H]
\centering
\resizebox{\textwidth}{!}{%
\caption{ \added[id=HP]{Comparison of key performance metrics (PRN, PET, average sample size, and trial duration) between DTE-BOP2 and BOP2 designs under various scenarios of delayed treatment effects. Each metric is averaged across 10,000 simulated trials. Sample size and trial duration are reported per arm and in months, respectively.
}  \deleted[id=HP]{Percentage of rejecting the null (PRN), percentage of early termination (PET), the actual sample size (each arm), and trial duration under the DTE-BOP2 and BOP2 design with a time-to-event endpoint in a two-arm randomized trial.}}
\begin{tabular}{|c|c|c|c|c|c|c|}
\hline
 & $(\bar{\mu}_0, \bar{\mu}_1)^\text{c}$ & Method & PRN (\%) & PET (\%) & Sample size & Trial duration$^\text{c}$ \\ \hline
\multirow{4}{*}{$S=0$} & \multirow{2}{*}{(3,3)$^\text{a}$} & BOP2 & 9.7 & 58.4 & 34.2 & 14.1 \\ \cline{4-7}
                       &                                   & DTE-BOP2 & 7.7 & 55.7 & 34.4 & 14.3 \\ \cline{2-7}
                       & \multirow{2}{*}{(3,6)$^\text{b}$} & BOP2 &91.0  &5.2  & 39.5 & 18.8 \\ \cline{4-7}
                       &                                   &  DTE-BOP2 &  90.0 & 4.4 & 39.6 & 18.9 \\ \hline
\multirow{4}{*}{$S=1$} & \multirow{2}{*}{(3,3)$^\text{a}$} & BOP2 &8.5  & 62.2 & 33.8 & 13.7 \\ \cline{4-7}
                       &                                   & DTE-BOP2  &  7.8 & 52.2 & 34.8 &14.6  \\ \cline{2-7}
                       & \multirow{2}{*}{(3,5)$^\text{b}$} & BOP2 &55.9  & 26.8 & 37.3 &16.9  \\ \cline{4-7}
                       &                                   &  DTE-BOP2 &  87.7& 5.6 & 39.4 & 18.9   \\ \hline
\multirow{4}{*}{$S=2.25$} & \multirow{2}{*}{(3,3)$^\text{a}$} & BOP2 & 9.8 & 70.0 & 33.0 & 13.0   \\ \cline{4-7}
                       &                                   & DTE-BOP2 &   8.0& 47.4 & 35.3 & 15.1  \\ \cline{2-7}
                       & \multirow{2}{*}{(3,3.75)$^\text{b}$} & BOP2 & 41.9 & 48.1 & 35.2 &15.0  \\ \cline{4-7}
                       &                                   &  DTE-BOP2  &  82.5& 7.7 & 39.2 &18.6  \\ \hline

\end{tabular}
\label{Table_1}
}

\begin{flushleft}
\footnotesize
\textsuperscript{a} Null hypothesis. \\
\textsuperscript{b} Alternative hypothesis. \\
\textsuperscript{c} In months.
\end{flushleft}
\end{table}

\subsection{Type-I Error and Power as Functions of the Separation Time}

\deleted[id=HP]{We further investigate the trends in type I error and power as functions of \( S \) using the DTE-BOP2 design.}
\added[id=HP]{To assess how reliably the calibrated design controls error across the entire prior range of the separation time, we investigated the type-I error and power curves of DTE-BOP2 as functions of the true \(S\).  Particular attention was paid to the effect of \emph{boundary control} -- that is, re-optimising the decision boundary \((\lambda,\gamma)\) so that the global type-I envelope rarely exceeds the nominal level -- versus a \emph{no-control} strategy that is tuned only for the average error.}

The following cases were examined:
\begin{enumerate}
  \item[1.] True \( S = 2.25 \) with hypotheses \( H_0: \bar{\mu}_0 = 3 \) months and \( H_1: \bar{\mu}_1 = 3.75 \) months.
  \item[2.] True \( S = 2.4 \) with hypotheses \( H_0: \bar{\mu}_0 = 3 \) months and \( H_1: \bar{\mu}_1 = 3.6 \) months.
\end{enumerate}

In both cases, we used \( S_{\text{likely}} = 2.25 \) and a truncated-Gamma prior \(S\sim\text{Gamma}(1,1)\,I(2.2,2.5)\).  Non-informative priors were adopted for \(\mu_j\) as described previously.  Two nominal type-I error thresholds were considered: \(\alpha = 0.10\) and \(\alpha = 0.15\).

\deleted[id=HP]{We plot the type I error and power across \( S \in [2.2, 2.5] \), as shown in Figures \ref{fig:1} and \ref{fig:1_1}. The results in Figures \ref{fig:1} and \ref{fig:1_1} illustrate the behavior of type I error and power under varying true separation times \(S\).}
\added[id=HP]{Figures~\ref{fig:1} and~\ref{fig:1_1} display the realised type-I error and power for a grid of true separation times \(S\in[2.2,2.5]\).  Each figure contains three sub-panels:  
(a)/(d) show the type-I error without boundary control;  
(b)/(e) show the type-I error with boundary control;  
(c)/(f) show the corresponding power curves (solid = no control, dashed = with control).}

\deleted[id=HP]{In Figure \ref{fig:1}, where the true \(S = 2.25\), the optimal parameters \(\lambda\) and \(\gamma\) obtained with and without boundary control coincide. Consequently, the type I error curves without and with boundary control are nearly identical, and both remain within the specified thresholds (\(\alpha = 0.1\) and \(\alpha = 0.15\)). Similarly, the power curves show negligible difference between the two strategies. In contrast, Figure \ref{fig:1_1}, with true \(S = 2.4\), demonstrates a clear distinction between the two approaches. Without boundary control, although the average type I error is controlled at the nominal level, the type I error exceeds the threshold at several separation times, particularly when \(\alpha = 0.15\). In contrast, with boundary control, the type I error remains strictly within the threshold across all separation times. While the boundary control strategy incurs a slight reduction in power, this loss is modest and acceptable given the improved error control.}
\added[id=HP]{When the true \(S\) equals the design value (Figure~\ref{fig:1}), the two calibration schemes coincide and both error curves lie comfortably below the nominal limits.  When the true \(S\) shifts to 2.4 (Figure~\ref{fig:1_1}), the no-control scheme exhibits local violations of the \(\alpha=0.15\) boundary around \(S\approx2.45\), whereas the boundary-control scheme keeps the entire curve below the red dashed line.  The safeguard costs no more than a 1 or 2\% loss of power, a modest trade-off for uniform error protection.}

\added[id=HP]{These findings highlight the value of boundary control when strict type-I regulation is required across a plausible range of separation times, without materially compromising power.}

\begin{figure}[H]
  \centering
  \resizebox{\textwidth}{!}{\includegraphics{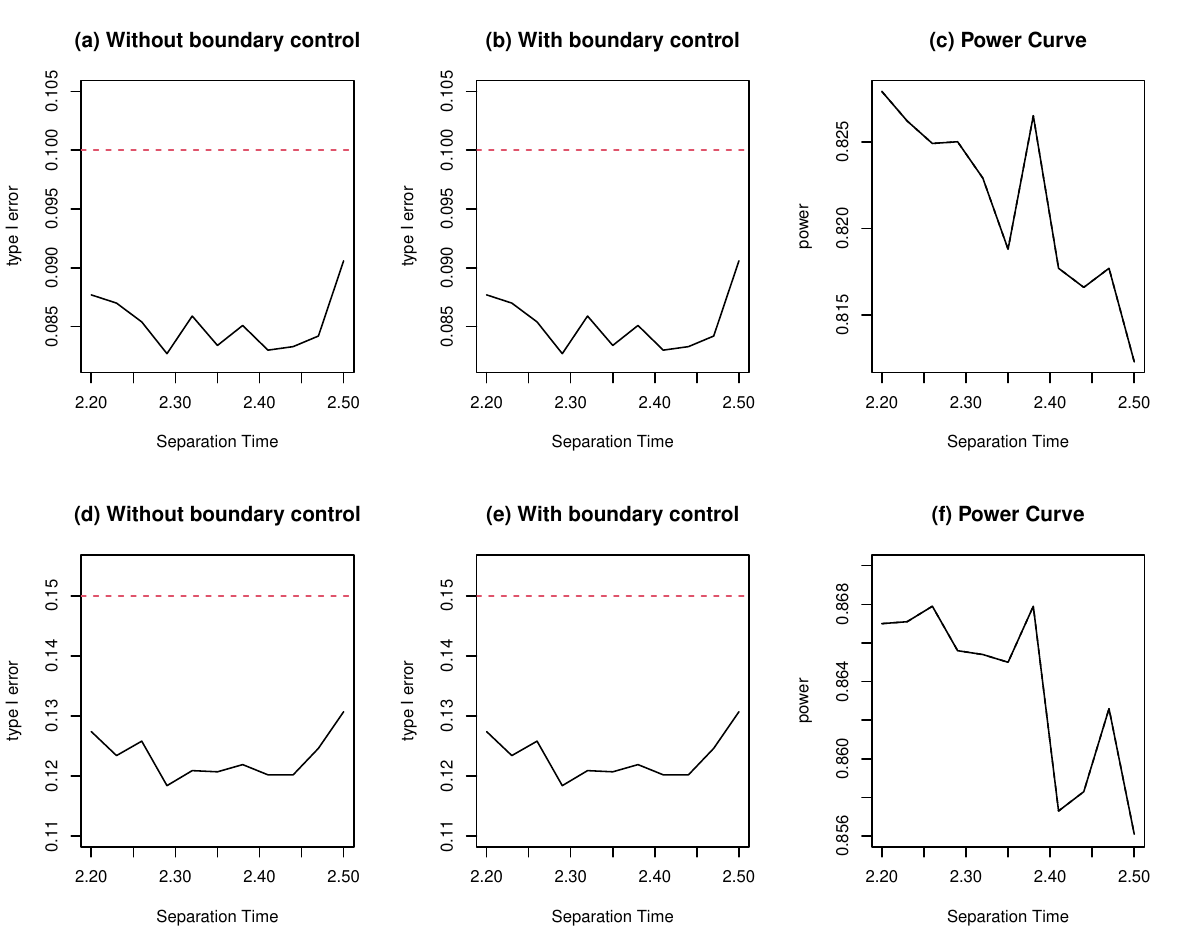}}
  \caption{True $S=2.25$.  Columns (a--c) correspond to a type-I error threshold of 0.10;  
           columns (d--f) correspond to 0.15.  
           Left: type-I error without boundary control;  
           middle: type-I error \emph{with} boundary control;  
           right: power curve (solid = no control, dashed = boundary control).}
  \label{fig:1}
\end{figure}

\begin{figure}[H]
  \centering
  \resizebox{\textwidth}{!}{\includegraphics{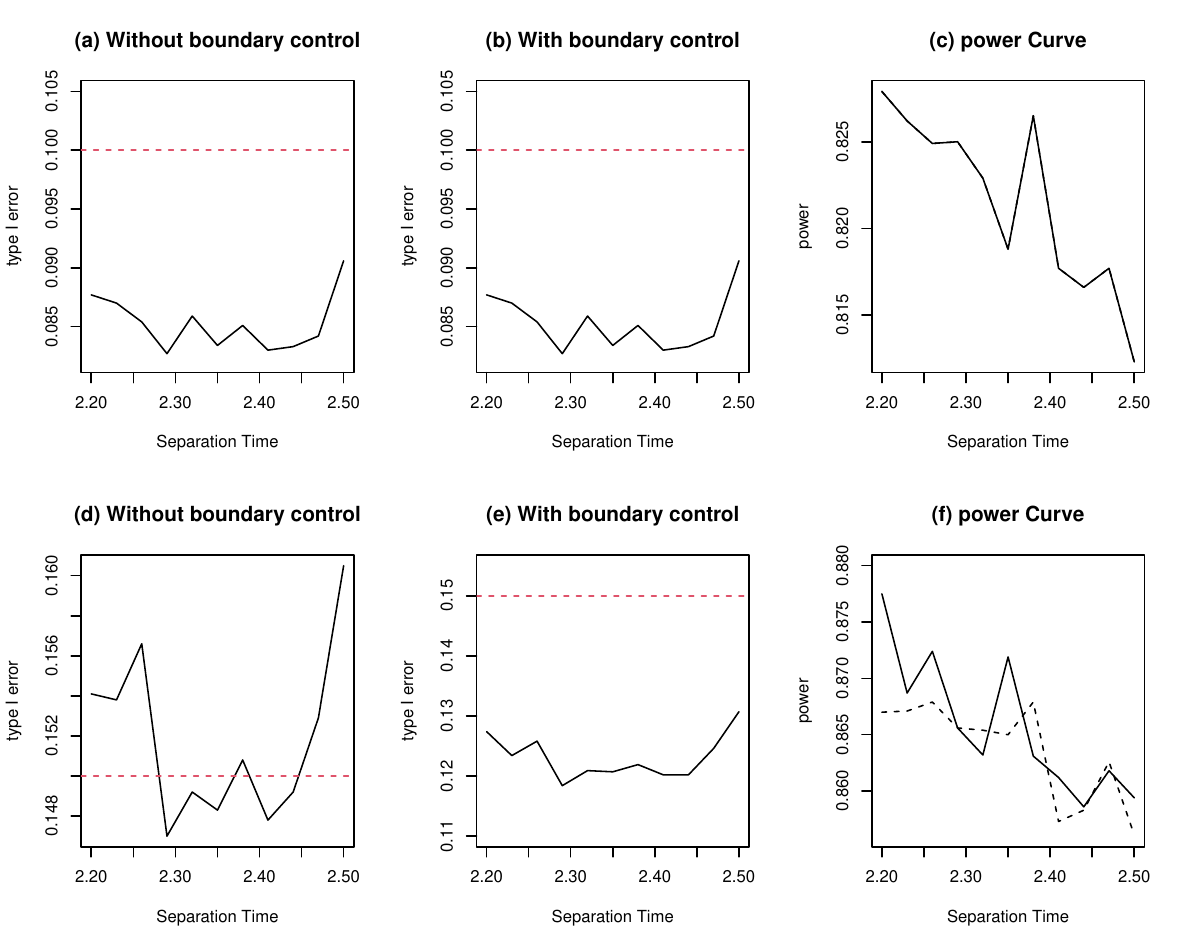}}
  \caption{True $S=2.4$.  Panel layout and line styles are identical to Figure~\ref{fig:1}.}
  \label{fig:1_1}
\end{figure}

\subsection{Sensitivity Analysis of the Choice of \(L\) and \(U\)}

\deleted[id=HP]{We performed a sensitivity analysis to examine the influence of different choices of the lower and upper bounds \( L \) and \( U \), respectively, under the default \( \text{Gamma}(1,1) \) prior specified in Equation \eqref{eq2.1_2}.}
\added[id=HP]{To assess the robustness of DTE\textendash BOP2 to the specification of the prior interval for the separation time~\(S\), we varied the lower and upper bounds \((L,U)\) while keeping the prior shape fixed at
\(S\sim\text{Gamma}(1,1)\,I(L,U)\)
as in Equation~\eqref{eq2.1_2}.  The \emph{control-- point} tuning algorithm was used to select \((\lambda,\gamma)\); it maximises the average power subject to a global two-sided type-I error constraint of~0.10.  Three interval widths,
\(U-L\in\{0.3,0.5,0.7\}\),
were examined, each at three different locations, yielding nine scenarios in total.}

The specific intervals were:
\begin{enumerate}
  \item[(1)] \(U-L=0.3\): \((L,U)=(2.4,2.7),\;(2.0,2.3),\;(1.8,2.1)\);
  \item[(2)] \(U-L=0.5\): \((2.4,2.9),\;(1.8,2.3),\;(1.6,2.1)\);
  \item[(3)] \(U-L=0.7\): \((2.4,3.1),\;(1.8,2.5),\;(1.6,2.3)\).
\end{enumerate}

\added[id=HP]{For each interval, the best-guess separation time was set to the midpoint,
\(S_{\text{likely}}=(L+U)/2\).  
The resulting optimal parameters and average operating characteristics are summarised in Table~\ref{tab:LU}.}

\vspace{1ex}
\begin{table}[H]
\centering
\caption{\added[id=HP]{Optimal tuning parameters and average operating characteristics under different prior intervals \((L,U)\).  Values in parentheses are (average type-I error, average power) averaged over \(S\sim\text{Gamma}(1,1)\,I(L,U)\).
Notably, the same pair \((\lambda,\gamma)=(0.95,1)\) is optimal for all scenarios, illustrating robustness to the choice of \(L\) and \(U\).  \(\tilde{\mu}_0=3\)~months.}}
\label{tab:LU}
\begin{tabular}{|c|c|c|c|}
\hline
$U-L$ & $(L,U)$ & $(\lambda,\gamma)$ & (AvgtypeI,\;AvgPower) \\ \hline
\multirow{3}{*}{0.3}
 & (2.4, 2.7) & (0.95, 1) & (0.087, 0.81) \\ \cline{2-4}
 & (2.0, 2.3) & (0.95, 1) & (0.088, 0.83) \\ \cline{2-4}
 & (1.8, 2.1) & (0.95, 1) & (0.084, 0.84) \\ \hline\hline
\multirow{3}{*}{0.5}
 & (2.4, 2.9) & (0.95, 1) & (0.089, 0.81) \\ \cline{2-4}
 & (1.8, 2.3) & (0.95, 1) & (0.081, 0.83) \\ \cline{2-4}
 & (1.6, 2.1) & (0.95, 1) & (0.079, 0.85) \\ \hline\hline
\multirow{3}{*}{0.7}
 & (2.4, 3.1) & (0.95, 1) & (0.084, 0.80) \\ \cline{2-4}
 & (1.8, 2.5) & (0.95, 1) & (0.085, 0.84) \\ \cline{2-4}
 & (1.6, 2.3) & (0.95, 1) & (0.086, 0.84) \\ \hline
\end{tabular}
\end{table}
\vspace{1ex}

\added[id=HP]{Two representative intervals were chosen for a deeper look at how type-I error and power change with the \emph{true} separation time \(S\):
a narrow prior \((U-L)=0.3\) and a wide prior \((U-L)=0.7\).
Figure~\ref{fig:LUcurves} displays the corresponding curves.}

\begin{figure}[H]
\centering
\subfigure[$U-L=0.3$]{\includegraphics[width=0.46\textwidth]{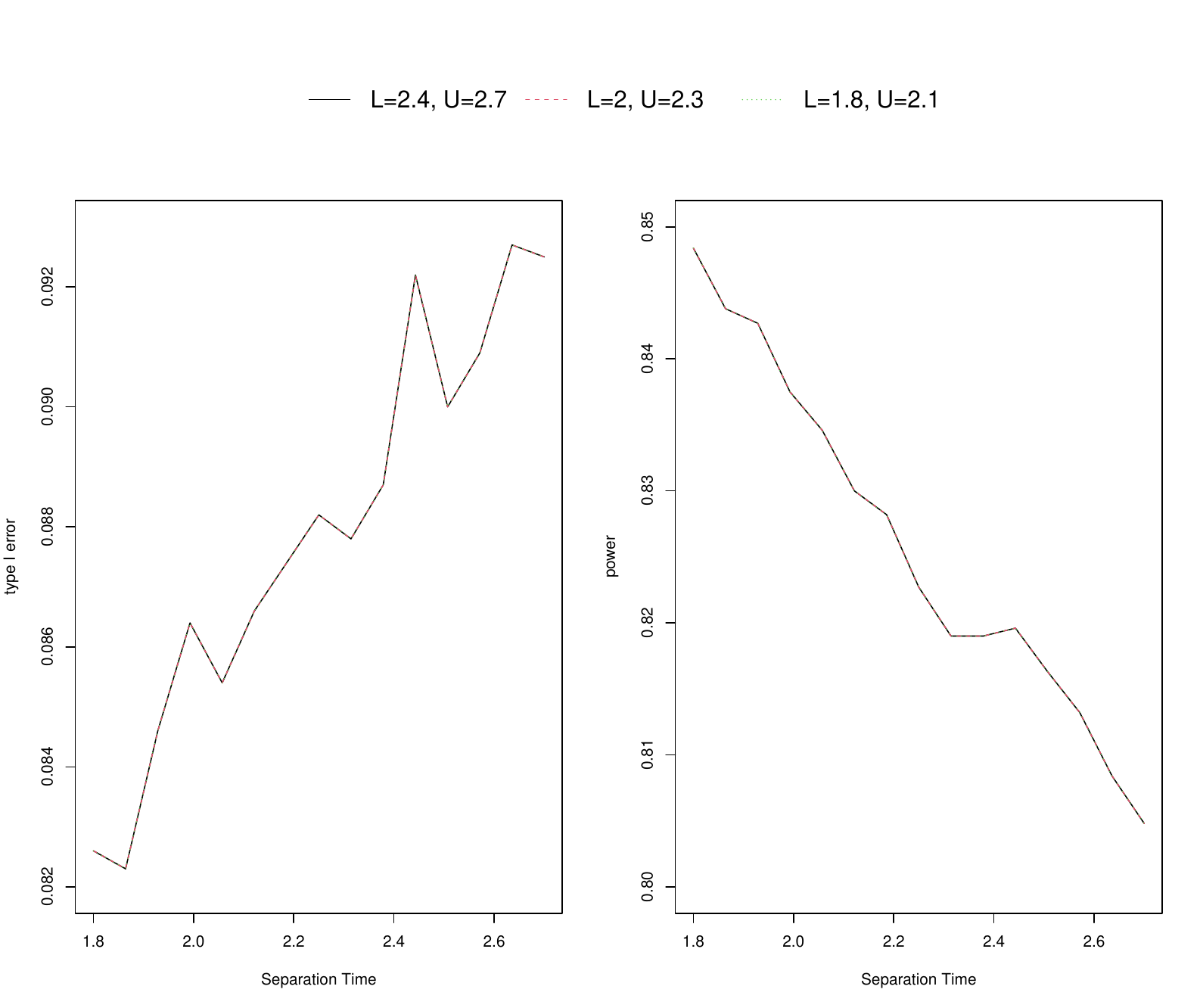}}
\hfill
\subfigure[$U-L=0.7$]{\includegraphics[width=0.46\textwidth]{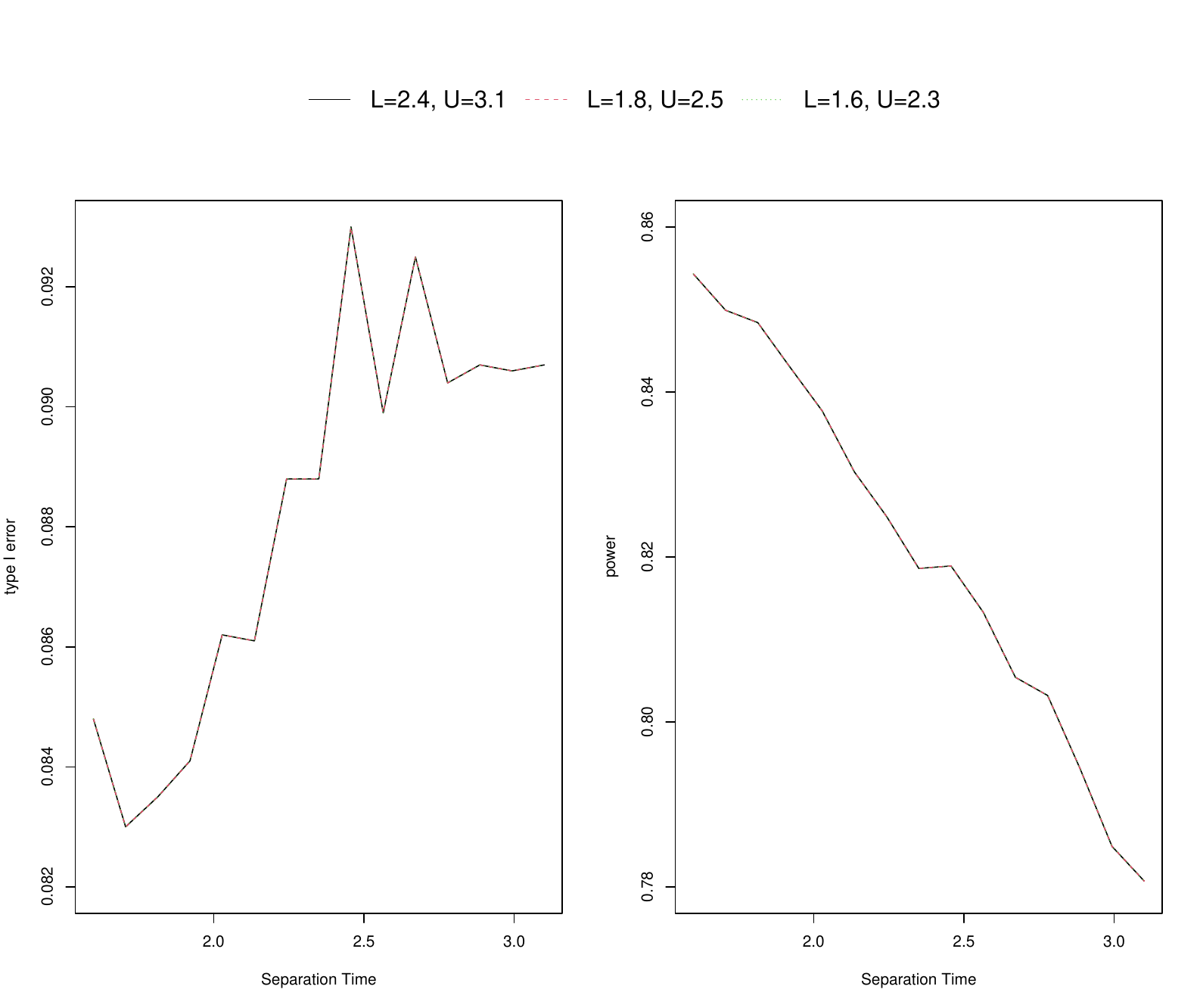}}
\caption{\added[id=HP]{Type-I error (left panel of each subfigure) and power (right panel) as functions of the true separation time \(S\).
The range of \(S\) plotted for each subfigure is bounded by the minimum \(L\) and maximum \(U\) in that group.}}
\label{fig:LUcurves}
\end{figure}

From Table~\ref{tab:LU} and Figure~\ref{fig:LUcurves} we draw three conclusions:
\begin{enumerate}
  \item Increasing the upper bound \(U\) shifts prior mass toward larger delays and therefore lowers the average power and power for given $S$.
  \item The optimal tuning parameters remain \((\lambda,\gamma)=(0.95,1)\) across all nine intervals, underscoring robustness to the choice of \((L,U)\).
  \item When the same tuning parameters apply, the type-I error and power curves are nearly identical across intervals; only the overall power level drifts downward as \(U\) grows.
\end{enumerate}

\added[id=HP]{A heuristic explanation for the power decline is provided in Appendix~A: as \(U\) increases, the quantile function in Equation~\eqref{formula} shrinks the effective post-delay information, reducing the signal-to-noise ratio.  Practically, this insight suggests choosing the prior interval as tight as prior knowledge allows, provided that the true \(S\) is believed to lie within \( [L,U] \).}

\section{Discussion}
\deleted[id=HP]{This work presents a flexible and efficient framework for designing two-arm multi-stage survival trials under delayed treatment effects (DTE). By incorporating prior knowledge about the likely separation timepoint $S$, our method enables substantial reductions in required sample size while preserving type I error control and achieving the desired statistical power. The use of the ratio $\tilde{\mu}_1/\tilde{\mu}_0$, based on posterior medians, instead of $\bar{\mu}_1/\bar{\mu}_0$, based on posterior means, provides a more stable and robust foundation for design optimization when DTE is anticipated.}

\added[id=HP]{We propose a comprehensive Bayesian decision framework for randomised phase-II survival trials in which the treatment effect is expected to emerge only after an unknown latency period.  
The design, termed DTE-BOP2, unifies four key ingredients:  
(i) an explicit piece-wise exponential likelihood that captures non-proportional hazards;  
(ii) a truncated-Gamma prior for the separation time~$S$, encoding expert or empirical knowledge of plausible delays;  
(iii) an adaptive rejection boundary that guarantees family-wise type-I error control over the entire interval $S\in[L,U]$; and  
(iv) decision rules formulated on the ratio of the posterior of \emph{medians} before and after separation time point, $\tilde\mu_1/\tilde\mu_0$, which is less sensitive to heavy-tailed posteriors than the classical overall median ratio $\bar\mu_1/\bar\mu_0$.  
Together, these components yield a time-driven design that simultaneously reduces sample size and preserves frequentist validity. The contributions of our work are as below:}

\deleted[id=HP]{A key methodological innovation of this work is the first introduction of a sample size calculation procedure for two-stage designs within a Bayesian framework. This contribution addresses a long-standing gap in trial planning under non-proportional hazards, offering a principled and data-informed alternative to traditional approaches. Moreover, our sensitivity analyses revealed that the calculated sample size for two-stage designs can offer practical guidance even when more than two stages are implemented. Although such extensions may not be strictly optimal, the resulting designs can still achieve power above the pre-specified threshold while maintaining type I error control. Another notable property of the proposed approach is its doubly robust performance across the prior interval $[L, U]$. Specifically, as long as both the true separation time $S$ and the best-guessed separation time $S_{\text{likely}}$ lie within $[L, U]$, the design reliably controls the type I error rate below the nominal level $\alpha$ and maintains statistical power above $1 - \beta$ throughout the interval. This property is especially advantageous in practice, where precise knowledge of $S$ is typically unavailable and misspecification is a concern.}
\begin{enumerate}
    \item[1.] \textbf{Sample-size determination under delayed effects}. To our knowledge, we provide the \emph{first closed-form algorithm} for calibrating interim and final sample sizes in a two-stage Bayesian design under non-proportional hazards.  Previous Bayesian proposals (e.g. \citet{Zhou:2020}) left sample size to ad-hoc rules or relied on proportional-hazard assumptions.  By optimising a weighted expected-sample-size criterion, DTE–BOP2 achieves near-minimal patient exposure under the null while maintaining high power under the alternative, and—importantly—retains these guarantees when the same boundary is embedded in designs with three or more looks.
    \item[2. ]\textbf{Doubly robust to misspecification.} When both the true delay~$S$ and the design-stage guess $S_{\text{likely}}$ reside in \([L,U]\), our simulation results show that type-I error never exceeds~$\alpha$, whereas the power remains above \(1-\beta\) throughout if we use the most conservative sample size strategy.  In contrast, methods that \emph{fix}~$S$ at a prespecified value--such as weighted log-rank tests or fixed-delay lag models--must conduct a post-hoc sensitivity analysis to assess vulnerability.  DTE–BOP2 internalises that uncertainty at the design stage.
    \item[3. ] \textbf{Software dissemination.} To facilitate implementation, we developed the \texttt{R} package \texttt{DTEBOP2}\footnote{\url{https://cran.r-project.org/package=DTEBOP2}}, which provides functions for sample size calculation, operating characteristic evaluation, and decision rule generation through concise syntax. The package also includes a vignette that reproduces all numerical studies presented in this paper, ensuring full reproducibility of results.
\end{enumerate}
\deleted[id=HP]{We also demonstrated that the operating characteristics of the proposed design are robust to the prior specification of $S$, including the width of the interval $[L, U]$. In comparative evaluations, the DTE-BOP2 design showed marked efficiency gains relative to the log-rank test, piecewise weighted log-rank test and the original BOP2 design. In scenarios with delayed treatment effects, DTE-BOP2 consistently achieved higher power, with the advantage becoming more pronounced as the actual separation time occurred later. These findings underscore the value of explicitly modeling delay-related uncertainty at the design stage. To promote practical implementation, we developed an open-source \texttt{R} package, \texttt{DTEBOP2} (available at \url{https://cran.r-project.org/web/packages/DTEBOP2/index.html}), which enables users to perform sample size calculation, simulate operating characteristics, and evaluate design performance under various assumptions. This tool enhances accessibility for practitioners and facilitates integration of our method into real-world applications.}

\added[id=HP]{Despite these contributions, our approach has several limitation. First, our piecewise exponential assumption can be relaxed to a Weibull or semi-parametric baseline is ongoing work.  
Second, while we provided two-stage sample-size formulas, a fully optimal multi-stage allocation remains an open problem; dynamic programming or reinforcement-learning tools may offer computationally tractable solutions.  
In addition, incorporating covariate-adjusted hazards could further improve efficiency in biomarker-driven trials.  These extensions will be explored in future research. Finally, uncertainty in patient accrual rates can materially affect operating characteristics.  
Explicitly modelling the accrual process--e.g., via a hierarchical prior on the enrollment intensity and propagating that uncertainty through the decision boundary--offers another promising route to further enhance the robustness of DTE–BOP2. Nevertheless, simulations based on a real-world example indicate that the operating characteristics remain acceptable even under plausible variability in accrual rates.}

\deleted[id=HP]{In conclusion, the DTE-BOP2 design provides a robust, computationally efficient, and conceptually transparent solution for designing clinical trials in the presence of delayed treatment effects. Its doubly robust properties, flexibility, and practical implementation make it a valuable tool for modern oncology and immunotherapy trials, where non-proportional hazards are both common and clinically meaningful.}

\added[id=HP]{In summary, DTE--BOP2 delivers a conceptually transparent yet computationally efficient framework for phase II trials with delayed treatment effects.  
It combines \emph{uniform type-I error control} with substantial sample-size savings, while its doubly robust decision rule confers protection against misspecification of the delay.  
Extensive simulations further demonstrate that the type I error remains stable across a wide range of separation times, while the power decreases monotonically as the separation time \( S \) increases. This trend provides practical guidance for controlling the power when \( S \in [L, U] \), where prior knowledge or historical data may inform plausible delay durations.  
Moreover, we find that the power is primarily governed by the \emph{ratio} of the median survival times before and after the separation point, rather than their absolute values, offering robustness and interpretability across a range of clinical scenarios.  
These features, together with freely available software, make the design particularly attractive for contemporary oncology and immuno-oncology studies in which non-proportional hazards are the rule rather than the exception.

}
\newpage
\appendix
\section{Trend Analysis of Type I error and Power with Separation Time}
To better understand the operating characteristics of our proposed design under delayed treatment effects, we conducted a comprehensive simulation study to evaluate the trends of type I error and power as a function of the separation time \( S \).  
Our results reveal that the type I error remains robustly controlled across a wide range of \( S \), consistently staying near the nominal level.  
In contrast, the power exhibits a monotonic decreasing trend as \( S \) increases, reflecting the diminishing early treatment information available for interim and final analyses. Furthermore, we observe that the power of the design is primarily driven by the \emph{ratio} between the median survival times before and after the separation time, rather than their absolute values.  
This insight highlights the relative nature of treatment effect accumulation in the presence of delayed onset and provides a more generalizable understanding of design behavior across scenarios with varying baseline hazards. To further support these findings, we provide an analytical explanation following the simulation results, which formalizes the observed relationships between separation time, median ratio, and power.
\subsection{Simulation}
Since the probability in Equation~\eqref{formula} is computed conditionally on the observed data, it represents a posterior probability that is itself a random quantity when the data are allowed to vary. Consequently, as we simulate datasets under different setting, this posterior probability induces a distribution, and it is natural to study its quantiles to assess its behavior under varying conditions. We investigate how the quantile of the probability
\begin{equation}
  P\left(K < \frac{b_0 + \text{TTOT}(I_0, I_1) + \sum_{i=1}^{n_r} z_{i0}}{b_0 + b_1 + \sum_{i=1}^{n_r} z_{i0} + \text{TTOT}(I_0, I_2)}\right)
  \label{app-1}
\end{equation}
varies with the separation time \( S \). Under the null hypothesis \( H_0 \), we assume \( \mu_0 = \mu_1 = \mu \). Conditional on observed data \( \mathcal{D}_{n_r} \) and a fixed \( S \), the posterior probability in \eqref{app-1} involves a Beta-distributed random variable \( K \sim \text{Beta}(a_0 + d_0 + d_{01}, a_1 + d_{11}) \). The threshold in the inequality depends on three components: \( \text{TTOT}(I_0, I_1) \), \( \text{TTOT}(I_0, I_2) = \text{TTOT}(I_0, I_1) + \text{TTOT}(I_1, I_2) \), and \( \sum_{i=1}^{n_r} z_{i0} \).

Following the setup in \citet{Han:2014}, these components follow:
\begin{align}
    \text{TTOT}(I_0, I_1) &\sim \text{Gamma}(d_{01}, \mu^{-1}), \nonumber \\
    \text{TTOT}(I_1, I_2) &\sim \text{Gamma}(d_{11}, \mu^{-1}), \nonumber \\
    \sum_{i=1}^{n_r} z_{i0} &\sim \text{Gamma}(d_0, \mu^{-1}).
    \label{app-2}
\end{align}
where $\text{Gamma}(a,b)$ has mean $a/b$. Because $\text{TTOT}(I_0, I_1)$ and $\text{TTOT}(I_1, I_2)$ are independent, it follows that \( \text{TTOT}(I_0, I_2)=\text{TTOT}(I_0, I_1)+\text{TTOT}(I_1, I_2) \sim \text{Gamma}(d_{01} + d_{11}, \mu^{-1}) \).

Under the alternative hypothesis \( H_1 \), we assume:
\begin{align}
    \text{TTOT}(I_0, I_1) &\sim \text{Gamma}(d_{01}, \mu^{-1}_0), \nonumber \\
    \text{TTOT}(I_1, I_2) &\sim \text{Gamma}(d_{11}, \mu^{-1}_1), \nonumber \\
    \sum_{i=1}^{n_r} z_{i0} &\sim \text{Gamma}(d_0, \mu^{-1}_0).
    \label{app-3}
\end{align}
To simplify the simulation setting, we ignore administrative censoring and interim analyses, assuming all deaths are observed, i.e., \( n_r = N \), \( d_0 = n_r \), and \( d_{01} + d_{11} = n_r \). We allow \( d_{01} \in [0, n_r] \) to vary continuously to represent the effect of different separation times. This allows us to analyze how the probability in \eqref{app-1} changes as the number of events before separation increases. Fractional values of \( d_{01} \) and \( d_{11} \) are used for smooth trend estimation.

We discretize \( d_{01} \) into 100 equally spaced values within \([0, n_r]\) to mimic the variation of \( S \). For a fixed quantile level \( B \), we compute the \( B \)th quantile of
\begin{equation}
    P\left(K < \frac{b_0 + \text{TTOT}(I_0, I_1) + \sum_{i=1}^{n_r} z_{i0}}{b_0 + b_1 + \sum_{i=1}^{n_r} z_{i0} + \text{TTOT}(I_0, I_2)}\right)
    \label{quantile}
\end{equation}
at each \( d_{01} \), which serves as a proxy for \( S \). This quantile can be interpreted as the type I error (under \( H_0 \)) or power (under \( H_1 \)). The procedure is as follows:

\begin{enumerate}
    \item[Step 1] Under \( H_0 \), for each value of \( d_{01} \), simulate 10,000 samples of \( \text{TTOT}(I_0, I_1) \), \( \text{TTOT}(I_1, I_2) \), and \( \sum z_{i0} \) using \eqref{app-2}. Compute the probability in \eqref{app-1} for each sample.
    \item[Step 2] Compute the \( B \)th quantile of the 10,000 simulated probabilities.
    \item[Step 3] Repeat Steps 1-2 across all values of \( d_{01} \in [0, n_r] \).
    \item[Step 4] Repeat Steps 1-3 under \( H_1 \), generating samples from \eqref{app-3}.
\end{enumerate}

We use parameters derived from the real-data example: \( a_0 = 4 \), \( a_1 = 4 \), \( b_0 = 12.12 \), \( b_1 = 24.24 \), \( \mu_0 = 2.8/\log(2) \), and \( \mu_1 = 6.57/\log(2) \) (under $H_1$). We consider sample sizes \( n_r \in \{40, 50, 60, 70\} \) and quantile levels \( B \in \{0.025, 0.05, 0.1\} \). 

To evaluate the sensitivity of the \( B \)th quantile in Equation~\eqref{quantile} to the separation time \( S \), we treat the parameter \( d_{01} \in [0, n_r] \) as a proxy for \( S \), with 100 equally spaced values used to approximate a continuous distribution. We then assess the range of quantile values over two specific subintervals of \( d_{01} \): the interquartile range \( [Q_{25}, Q_{75}] \) the wider range \( [Q_{10}, Q_{90}] \). For each interval, we compute the minimum and maximum of the \( B \)th quantile, as well as the difference between them. Additionally, we calculate the Spearman correlation coefficient between the quantile values and the corresponding \( d_{01} \) values within each interval to assess monotonicity. These metrics provide a comprehensive picture of how sensitive the probability in \eqref{quantile} is to the timing of the treatment effect onset, as represented by \( S \). Results are reported in Table~\ref{app-table}.

From these results, we observe the following:
\begin{itemize}
    \item Under \( H_0 \), the $B$th quantile demonstrates limited variability across different values of \( S \), with the difference between the maximum and minimum values consistently bounded within 0.05. This indicates that the average type I error serves as a robust summary measure, accurately reflecting its behavior over varying separation times.
    \item Under \( H_1 \), the Spearman correlation is consistently close to $-1$, revealing a nearly perfectly monotonic decline in power as the proportion of events occurring before the separation time increases.
\end{itemize}

To further illustrate the trends observed in our simulations, Figure~\ref{fig:app_1} displays the $B = 0.05$ quantile for \( n_r = 50 \) and \( n_r = 60 \). These plots confirm our earlier findings: the type I error remains stable across most values of $S$, while power exhibits a clear and consistent downward trend. We also explored different combinations of $\mu_0$ and $\mu_1$, and found that as long as the ratio $\mu_1 / \mu_0$ remains constant, the overall patterns of type I error and power are essentially unchanged.

To provide a deeper understanding of this phenomenon, we now analyze the analytical form underlying the right-hand side of Equation~\eqref{formula}. Under the default prior specification, where $a_0 = a_1 = 4$, $b_0 = 3\mu_0$, and $b_1 = 6\mu_0$, the expression simplifies to:
\begin{align}
    \frac{b_0 + \text{TTOT}(I_0, I_1) + \sum_{i=1}^{n_r} z_{i0}}{b_0 + b_1 + \sum_{i=1}^{n_r} z_{i0} + \text{TTOT}(I_0, I_2)}
    &= \frac{3 + \mu_0^{-1} \text{TTOT}(I_0, I_1) + \mu_0^{-1} \sum_{i=1}^{n_r} z_{i0}}{9 + \mu_0^{-1} \left( \sum_{i=1}^{n_r} z_{i0} + \text{TTOT}(I_0, I_1) + \text{TTOT}(I_1, I_2) \right)} \nonumber \\
    &= \frac{3 + X + Y}{9 + X + Y + \frac{\tilde{\mu}_1}{\tilde{\mu}_0} Z}, \label{ratio}
\end{align}
where $X \sim \text{Gamma}(d_{01}, 1)$, $Y \sim \text{Gamma}(d_0, 1)$, and $Z \sim \text{Gamma}(d_{11}, 1)$.

This formulation yields two key insights:
\begin{itemize}
    \item The right-hand side of Equation~\eqref{formula} depends solely on the ratio $\mu_1 / \mu_0 = \tilde{\mu}_1 / \tilde{\mu}_0$ and is invariant to the absolute magnitudes of $\tilde{\mu}_0$ and $\tilde{\mu}_1$.
    \item Consequently, for a fixed sample size and a given value of $\tilde{\mu}_1 / \tilde{\mu}_0$, the qualitative behavior of both type I error and power remains consistent.
\end{itemize}

\begin{figure}
    \centering
    \resizebox{\textwidth}{!}{\includegraphics{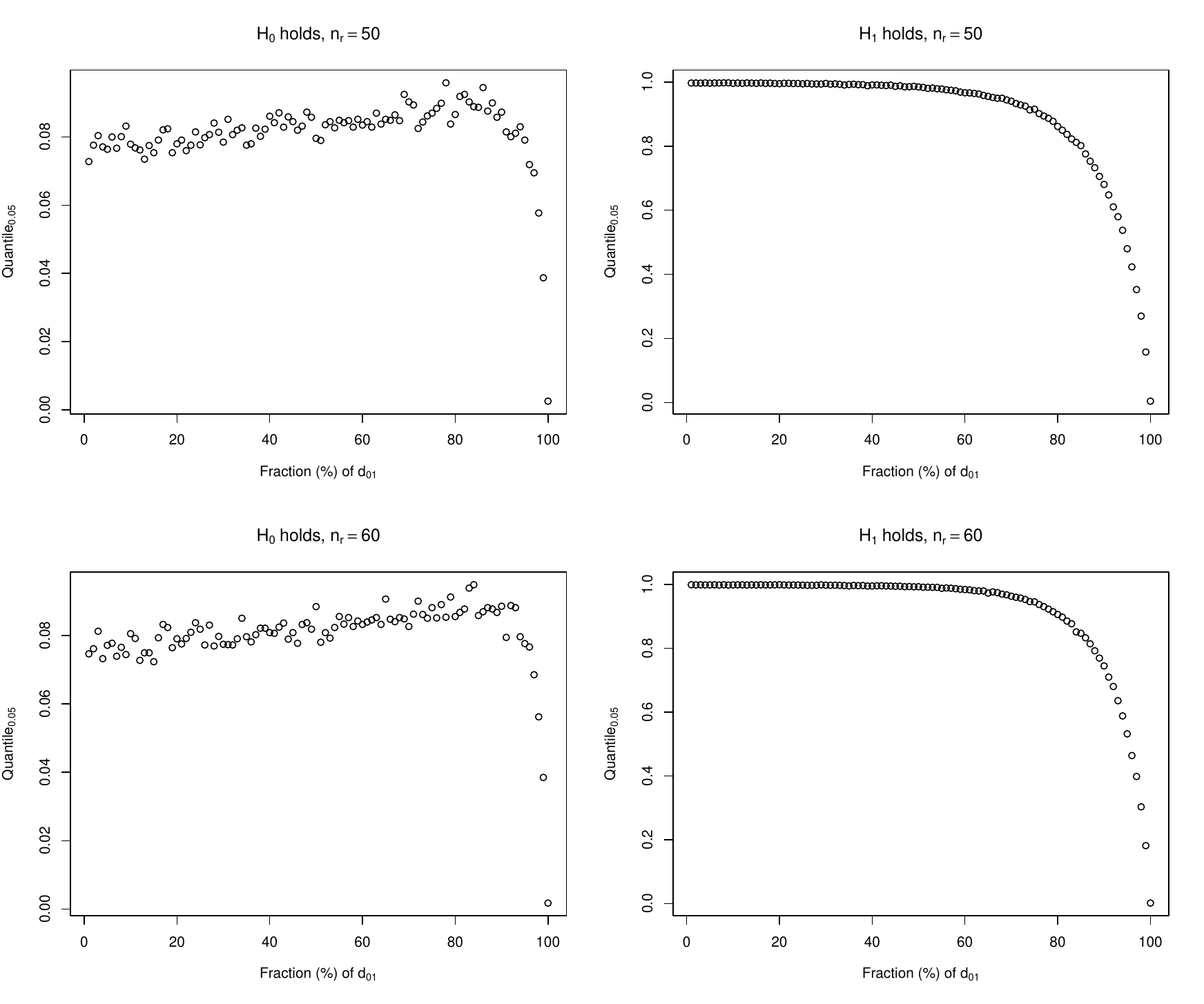}}
   \caption{The 0.05 quantile of \eqref{formula} under $H_0$ and $H_1$ for sample sizes of 50 and 60.}
    \label{fig:app_1}
\end{figure}

\begin{table}[ht]
\centering
\begin{tabular}{|c|c|c|c|c|c|c|c|c|c|}
\hline
\multicolumn{10}{|c|}{\textbf{Under $H_0$}} \\
\hline
\multirow{2}{*}{$n_r$} & \multirow{2}{*}{$B$} & \multicolumn{4}{c|}{$[Q_{25},Q_{75}]$} & \multicolumn{4}{c|}{$[Q_{10},Q_{90}]$} \\
\cline{3-10}
                       &                      & min & max & max-min & Spearman $\rho$& min & max & max-min & Spearman $\rho$ \\
\hline
\multirow{3}{*}{40} & 0.025 &  0.039 & 0.049  &  0.01 & 0.17  &  0.049 & 0.038  & 0.012  & 0.067  \\
                    & 0.05  &    0.08    &   0.09     &  0.01      &   0.62    &     0.08   &   0.09     &   0.01     &   0.79     \\
                    & 0.1   &     0.16   &   0.19     &  0.03      &    0.81   &  0.15      & 0.20       &    0.05    &   0.92     \\
\hline
\multirow{3}{*}{50} & 0.025 &     0.037   &   0.047     &   0.01     &   0.44    &    0.037    &   0.047     &   0.01     &  0.42    \\
                    & 0.05  &     0.079   &   0.091     &   0.012     &     0.49  &  0.076      &  0.094      &  0.018      &    0.72    \\
                    & 0.1   &     0.15   &  0.18      &  0.03      &    0.9   &   0.14     &  0.19      &   0.05     &     0.92   \\
\hline
\multirow{3}{*}{60} & 0.025 &    0.037    &  0.047      &    0.01    &   0.52    &    0.036    &    0.047    &   0.012 & 0.59   \\
                    & 0.05  &      0.075  &   0.091     &  0.016      &  0.65     &    0.073    &   0.093     &  0.02      &   0.77     \\
                    & 0.1   &     0.15   &   0.17     & 0.02       &    0.84   &     0.14   &   0.19     &  0.05      &    0.94    \\
\hline
\multirow{3}{*}{70} & 0.025 &     0.037   &  0.049      &   0.02     &    0.57   &    0.037    & 0.049   &  0.02  & 0.55       \\
                    & 0.05  &    0.074    &   0.06     &  0.014      &   0.60    &  0.071      & 0.092       &    0.021    &      0.81  \\
                    & 0.1   &    0.14    &   0.17     &  0.03      &    0.78   &  0.14      &   0.18     &    0.04    &     0.91   \\
\hline
\hline
\multicolumn{10}{|c|}{\textbf{Under $H_1$}} \\
\hline
\multirow{3}{*}{40} & 0.025 &    0.78    &  0.97     &   0.19     &    -0.99   &     0.49   &  0.98      &  0.48 &    -0.99    \\
                    & 0.05  &    0.86   &   0.98    &   0.12    &  -0.99    &    0.61   &  0.99      &  0.38      &  -0.99     \\
                    & 0.1   &        0.92 &    0.99    & 0.07       &    -0.99   &   0.75     &  0.99      &   0.24     &  -0.99          \\
\hline
\multirow{3}{*}{50} & 0.025 &     0.86  &   0.99     &    0.13    &   -0.99    &     0.56   &  0.99      &    0.44    &  -0.99        \\
                    & 0.05  &    0.91    &   0.99     &  0.08      &   -0.99    &   0.67     &   0.99     &   0.32     &   -0.99     \\
                    & 0.1   &     0.95   &   0.99     &   0.04     &  -0.98     &    0.80    &  0.99      &  0.19      &   -0.99     \\
\hline
\multirow{3}{*}{60} & 0.025 &  0.90      &    0.99    &    0.09    &   -0.99    &    0.63    &   0.99   &   0.36    &    -0.99    \\
                    & 0.05  &  0.94      &  0.99      &    0.05    &  -0.99     &   0.73     &  0.99      &  0.26      &   -0.99     \\
                    & 0.1   &   0.97     &   0.99     & 0.02       &    -0.98   &    0.84    &    1    &   0.16     & -0.98       \\
\hline
\multirow{3}{*}{70} & 0.025 &   0.94     &   0.99     &   0.05     &   -0.99    &   0.69     &  0.99      &    0.3    &  -0.99   \\
                    & 0.05  &    0.96    &  0.99      &   0.03     &   -0.99    &     0.78   &  1      &    0.22    &     -0.99   \\
                    & 0.1   &   0.98    &   1     &   0.02   &   -0.98   &  0.88    &   1   &    0.12    &   -0.98     \\
\hline
\end{tabular}
\caption{Quantile results of \eqref{quantile} under $H_0$ and $H_1$ across $[Q_{25}, Q_{75}]$ and $[Q_{10}, Q_{90}]$ intervals for varying $n_r$ and $B$ values.}
\label{app-table}
\end{table}

 \newpage
 	\renewcommand{\baselinestretch}{1.0} \tiny\normalsize
	\setlength{\parskip}{1.0mm}
	\bibliography{BGS}
	\bibliographystyle{agsm}
\end{document}